\newcommand{\bea}{\begin{eqnarray}}
\newcommand{\eea}{\end{eqnarray}}
\def\beaa{\begin{eqnarray*}}
\def\eeaa{\end{eqnarray*}}
\def\ba{\begin{array}}
\def\ea{\end{array}}
\def\be#1{\begin{equation} \label{#1}}
\def \eeq{\end{equation}}
\def\a{{\alpha}}
\def\b{{\beta}}
\def\be{{\beta}}
\def\ga{\gamma}
\def\Ga{\Gamma}
\def\de{\delta}
\def\la{\lambda}
\def\ze{\zeta}
\def\p{{\partial}}
\def\LL{{\mathcal L}}
\def\I{{\bf I}}
\def\M{{\bf M}}
\def\N{{\bf N}}
\def\O{{\bf O}}
\def\S{{\bf S}}
\def\K{{\bf K}}
\def\g{{\bf g}}
\def\f12{{\frac 1 2}}
\def\ub{\underline{u}}
\def\Lb{{\,\underline{L}}}
\def\chib{{\underline \chi}}
\def\f{\widetilde{f}}
\newtheorem{theorem}{Theorem}[section]
\newtheorem{lemma}[theorem]{Lemma}
\newtheorem{claim}[theorem]{Claim}
\newtheorem{proposition}[theorem]{Proposition}
\newtheorem{definition}[theorem]{Definition}
\newtheorem{remark}[theorem]{Remark}
\numberwithin{equation}{section}
\begin{document}\title[On Hawking's Local Rigidity Theorems for Charged Black Holes]{On Hawking's Local Rigidity Theorems for Charged Black Holes}
\author{Pin Yu}
\address{Princeton University}
\email{pinyu@math.princeton.edu}

\begin{abstract}
We show the existence of a Hawking vector field in a full
neighborhood of a local, regular, bifurcate, non-expanding horizon
embedded in a smooth Einstein-Maxwell space-time without assuming
the underlying space-time is analytic. It extends one result of
Friedrich, R\'{a}cz and Wald, see \cite{FRW}, which was limited to
the interior of the black hole region. Moreover, we also show, in
the presence of an additional Killing vector field $T$ which
tangent to the horizon and not vanishing on the bifurcate sphere,
then space-time must be locally axially symmetric without the
analyticity assumption. This axial symmetry plays a fundamental
role in the classification theory of stationary black holes.

\end{abstract}
\maketitle

\section{Introduction}\label{introduction}
Let $(\M, g, F)$ be a smooth and time oriented Einstein-Maxwell
space-time of dimension $3+1$ with electromagnetic field $F$. Let
$\S$ be an smoothly embedded space-like $2$-sphere in $\M$ and
$\N^+$, $\N^-$ be the corresponding null boundaries of the causal
future and the causal past of $\S$. We also assume that both
$\N^+$ and $\N^-$ are regular, achronal, null hypersurfaces in a
neighborhood $\O$ of $\S$. The triplet $(\S, \N^+, \N^-)$ is
called a local, regular bifurcate horizon in $\O$. The main result
of the paper asserts if $(\S, \N^+, \N^-)$ is non-expanding (see
Definition \ref{nonexpansion}), then it must be a Killing
bifurcate horizon. More precisely, we have the following theorem:

\begin{theorem}\label{first}Given a local, regular, bifurcate,
non-expanding horizon $(\S, \N^+, \N^-)$ in a smooth and time
oriented Einstein-Maxwell space-time $(\O, g, F)$, there exists an
neighborhood $\O' \subset \O$ of $S$ and a non-trivial Killing
vector field $K$ in $\O'$, which is tangent to the null generators
of $\N^+$ and $\N^-$. Moreover, the Lie derivative $\LL_K F =0$.
\end{theorem}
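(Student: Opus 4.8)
The plan is to build the Killing field $K$ in stages: first along the bifurcate horizon $\N^+\cup\N^-$, then extend it to a two-sided neighborhood of $\S$ by solving a covariant wave equation, and finally prove that $K$ is Killing and preserves $F$ by a unique continuation argument for a closed system of tensorial wave equations. First I would set up a double null foliation $(u,\um,\cdot)$ adapted to $\S$, $\N^+$, $\N^-$, with $\N^+=\{\um=0\}$, $\N^-=\{u=0\}$, $\S=\{u=\um=0\}$, and a normalized null pair tangent to the generators of $\N^+$ and $\N^-$. The non-expansion hypothesis forces the expansion $\trch$ to vanish on $\N^+$ and $\trchb$ on $\N^-$; inserting this into the Raychaudhuri equation together with the Einstein--Maxwell equations (and the null energy condition for $F$) kills the shears $\chih$, $\chibh$ and, in turn, the transversal Ricci components and the relevant components of $F$ along the horizon, and yields the zeroth law that the surface gravity $\ka$ is constant on $\S$. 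With these facts the connection coefficients on $\N^+\cup\N^-$ are pinned down, and the candidate $K=\ka(u\,\p_u-\um\,\p_{\um})$ is tangent to the generators of both branches and vanishes on $\S$. Using the horizon structure equations I would then verify that $\pi:=\LL_K g$ and $\LL_K F$, together with the transversal derivatives needed as characteristic data, vanish along $\N^+\cup\N^-$; the constancy of $\ka$ and the vanishing of the shears and expansions are precisely what render the transport equations for these quantities along the generators homogeneous, so that the zero data on $\S$ propagates.

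Next I would extend $K$ off the horizon by solving the characteristic initial value problem for the covariant wave equation $\Box_g K_\mu + R_\mu{}^\nu K_\nu = 0$ with the data just described on $\N^+\cup\N^-$. This is a standard, well-posed linear characteristic problem, requiring no analyticity, and produces $K$ on the domain of dependence of a neighborhood of $\S$ in $\N^+\cup\N^-$. A short causality argument shows that, because the horizon is \emph{bifurcate}, this domain of dependence is a genuine open neighborhood $\O'$ of $\S$ meeting all four sectors; this is exactly the point at which the bifurcate structure improves on \cite{FRW}, whose construction was confined to one side of the horizon.

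It then remains to show $\pi\equiv 0$ and $\LL_K F\equiv 0$ on $\O'$. Since $K$ solves the wave equation above, $\pi=\LL_K g$ satisfies a covariant wave equation whose right-hand side is linear in $\pi$, its first derivatives, and $\LL_K R_{\mu\nu}$. In the Einstein--Maxwell setting $R_{\mu\nu}$ is algebraically determined by $F$, so $\LL_K R_{\mu\nu}$ is expressed through $\LL_K F$ and $\pi$; and $\LL_K F$ itself satisfies a wave equation obtained by applying $\LL_K$ to $dF=0$ and $d\,{\star}F=0$ (using that $\LL_K$ of the Hodge star is $\pi$-linear), again with right-hand side linear in $\LL_K F$, $\pi$ and their first derivatives. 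Thus $(\pi,\LL_K F)$ --- enlarged if necessary by the auxiliary tensor $\BB:=\LL_K R$ to close the derivatives of curvature --- solves a closed homogeneous system of tensorial wave equations with smooth coefficients on $\O'$, with vanishing characteristic data on $\N^+\cup\N^-$ by the first step. The standard energy-estimate uniqueness for such systems then forces $\pi\equiv 0$ and $\LL_K F\equiv 0$ on $\O'$, i.e.\ $K$ is Killing and $\LL_K F=0$; $K$ is tangent to the generators of $\N^\pm$ by construction and is non-trivial, being nonzero near (though vanishing on) $\S$.

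The main obstacle is the combination of the horizon computation and the closing of the wave system: one must identify the precisely correct tensorial wave equations whose source terms are genuinely homogeneous in $(\pi,\LL_K F,\BB)$ --- getting the coupling to $F$ right, namely the $\LL_K R_{\mu\nu}$ and $\LL_K{\star}$ contributions, so that the system truly closes --- and one must verify, using \emph{only} the non-expansion hypothesis and the Einstein--Maxwell constraint and null structure equations restricted to $\N^+\cup\N^-$, that the characteristic data of this system vanish. The unique continuation step itself and the causality argument for $\O'$ are comparatively routine once these two ingredients are in place.
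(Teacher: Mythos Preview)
Your proposal contains a fundamental causality error that invalidates the argument outside the black/white hole regions. You claim that ``because the horizon is bifurcate, this domain of dependence is a genuine open neighborhood $\O'$ of $\S$ meeting all four sectors.'' This is false: the domain of dependence of $\N^+\cup\N^-$ is exactly $\I^{++}\cup\I^{--}$ (the causal future and past of $\S$), and it does \emph{not} contain any point of the exterior wedges $\I^{+-}\cup\I^{-+}$. In those wedges the characteristic initial value problem for $\square_g K_\a+R_\a{}^\b K_\b=0$ with data on $\N^+\cup\N^-$ is ill-posed, so neither the construction of $K$ nor the ``standard energy-estimate uniqueness'' for the $(\pi,\LL_K F)$ system goes through there. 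The result of \cite{FRW} already covers $\I^{++}\cup\I^{--}$; the new content of Theorem~\ref{first} is precisely the extension to $\I^{+-}\cup\I^{-+}$, and your outline supplies no mechanism for it.

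The paper's actual strategy for the exterior is quite different. First, inside $\I^{++}\cup\I^{--}$ one shows (Proposition~\ref{interior}) not only that $K$ is Killing and $\LL_K F=0$ but also the transport identity $[\Lb,K]=-\Lb$. This ODE is well-posed along the integral curves of $\Lb$, which are transverse to $\N^+$, so one can \emph{define} $K$ in the exterior by solving $[\Lb,K]=-\Lb$ with data $K=\ub L$ on $\N^+$ (Lemma~\ref{extendK}). To prove this $K$ is Killing one cannot use energy estimates; instead one compares $g$ with $\phi_t^*g$ (where $\phi_t$ is the flow of $K$), observes that both solve Einstein--Maxwell, agree to infinite order on $\N^+\cup\N^-$, and share the geodesic vector field $\Lb$, and then invokes a unique continuation theorem for metrics based on Carleman estimates (Proposition~\ref{unique}, following \cite{Al,AIK,IK,IK2}). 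The Carleman-estimate machinery is the essential missing idea in your proposal; without it there is no way to propagate information from the horizon into the ill-posed region.
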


The vector field $K$ is called the Hawking vector field in the
literature. Its existence is already known under the assumption
that the space-time is real analytic. In the work of \cite{FRW},
the authors showed, by solving wave equations, the existence of
Hawking vector field $K$ without the analyticity assumption, but
$K$ could only be constructed inside the domain of dependence of
$\N^+ \cup \N^-$ due to the fact that the corresponding wave
equations are ill-posed outside this region. So the new ingredient
of our theorem is to extend the Hawking vector field $K$ to a full
neighborhood of the bifurcate sphere $S$, without making any
additional regularity assumptions on the underlying space-time
$(\M, g)$.  We use the idea of S. Alexakis, A. Ionescu and S.
Klainerman, who proved a similar theorem for Einstein vacuum
space-time, see \cite{AIK} for details.

We also prove the following theorem:

\begin{theorem}\label{second}
Given a local, regular, bifurcate horizon $(\S, \N^+, \N^-)$ in a
smooth and time oriented Einstein-Maxwell space-time $(\O, g, F)$.
If there is a Killing vector field $T$ tangent to $\N^+ \cup \N^-$
and non-vanishing on $\S$. Then there is a neighborhood $\O'
\subset \O$ of $\S$, such that we can find a rotational Killing
vector $Z$ in $\O'$, i.e. $Z$ has closed orbits. Moreover, $[Z, T]
= 0$. If in addition $\LL_T F =0$, then $\LL_Z F =0$.
\end{theorem}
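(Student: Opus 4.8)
The plan is to reduce Theorem~\ref{second} to Theorem~\ref{first} together with the rigidity of Killing fields near a bifurcation surface. First, Theorem~\ref{first} applies in the present setting (the non-expanding condition holds here — for instance by the Raychaudhuri equation along the generators of $\N^\pm$ emanating from $\S$ and the null energy condition satisfied by the Maxwell field), so on a neighborhood $\O'$ of $\S$ we obtain a non-trivial Killing field $K$ tangent to the null generators of $\N^+$ and of $\N^-$, with $\LL_K F = 0$. Since these two families of generators are transverse along $\S$, tangency to both forces $K|_\S = 0$; hence $\S$ is the bifurcation surface of the Killing horizon $\N^+\cup\N^-$, and $\nabla_K K = \kappa K$ on $\N^+$ with $\kappa$ a non-zero constant (non-zero because otherwise $\nabla K|_\S$ would vanish and then $K$ would vanish identically). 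The rotational field will be produced in the form $Z = T + cK$ for a suitable real constant $c$; such a $Z$ is automatically Killing, and $[Z,T] = c\,[K,T]$ and $\LL_Z F = \LL_T F + c\,\LL_K F$, so it remains to prove that $[T,K] = 0$ and that $c$ can be chosen so that the orbits of $Z$ close up.

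\medskip

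\noindent\emph{Vanishing of the commutator.} The field $W := [T,K] = \LL_T K$ is a Killing field; I claim it vanishes at any point $p \in \S$ together with its covariant derivative, whence $W \equiv 0$ on the connected set $\O'$. Indeed $W(p) = 0$: since $T$ is tangent to both $\N^+$ and $\N^-$, at $p$ it is tangent to $\S = \N^+ \cap \N^-$, and as $K$ vanishes identically along $\S$ its directional derivative along $T(p)$ vanishes, while $\nabla_K T|_p = 0$ because $K(p) = 0$. For the covariant derivative, using that $K$ is Killing one has $\nabla W|_p = [\nabla K|_p,\,\nabla T|_p]$, the commutator of the two antisymmetric endomorphisms of $T_pM$. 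Now $\S$, being the zero set of the Killing field $K$, is totally geodesic, and since $T$ is tangent to $\S$ this implies that $\nabla T|_p$ preserves the orthogonal splitting $T_pM = T_p\S \oplus N_p\S$; on the other hand $\nabla K|_p$ annihilates $T_p\S$ and restricts on the $2$-dimensional Lorentzian normal plane $N_p\S$ to a boost generator. Two such endomorphisms commute (on $T_p\S$ the factor $\nabla K|_p$ is zero, and on $N_p\S$ both lie in the one-dimensional space of boost generators), so $\nabla W|_p = 0$ and $[T,K] = 0$.

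\medskip

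\noindent\emph{Closing up the orbits, and the Maxwell field.} The restriction $T|_\S$ is a non-trivial Killing field of the compact surface $\S$, hence generates a periodic flow; let $p_0 > 0$ be its period, so the isometry $\Phi := \phi^T_{p_0}$ of $\O'$ restricts to the identity on $\S$. At $p \in \S$ its differential is the identity on $T_p\S$ and, being in the identity component of the isometry group, a boost $B_{\beta(p)}$ on $N_p\S$; comparing $1$-jets at $p$ and using $K|_\S = 0$ together with the fact that $\nabla K|_p$ is $\kappa$ times the boost generator, $\Phi$ has the same $1$-jet at $p$ as $\phi^K_{\beta(p)/\kappa}$, hence equals it. As the generators of $\N^+$ are non-closed orbits of $K$, this forces $\beta(p)/\kappa$ to be a constant $s_0$ and $\Phi = \phi^K_{s_0}$. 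Put $c := -s_0/p_0$ and $Z := T + cK$. Since $[T,K] = 0$ the flows commute, so $\phi^Z_{p_0} = \phi^T_{p_0}\circ\phi^K_{cp_0} = \Phi\circ\phi^K_{-s_0} = \mathrm{id}$; thus every orbit of $Z$ is closed, $Z$ is non-trivial since $Z|_\S = T|_\S \not\equiv 0$, and $[Z,T] = c\,[K,T] = 0$. Finally, if $\LL_T F = 0$, then since $c$ is constant and $\LL_K F = 0$ by Theorem~\ref{first} we get $\LL_Z F = \LL_T F + c\,\LL_K F = 0$.

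\medskip

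\noindent The genuinely hard input is Theorem~\ref{first} itself, which we are assuming. Granting it, the main point requiring care is the vanishing of $[T,K]$: a priori this is only a Killing field, and pinning it down to the value and first covariant derivative of $K$ at the bifurcation surface — where the total geodesy of $\S$ and the boost structure of $\nabla K|_\S$ are what make the relevant commutator vanish — is the crux. The closing-up of the orbits is softer, amounting to matching a boost parameter in the normal bundle of $\S$ against the surface-gravity boost flow generated by $K$.
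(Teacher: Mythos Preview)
Your proof is correct and takes a genuinely different route from the paper, particularly for the commutator $[T,K]=0$. The paper shows $[T,K]=0$ by computing it explicitly on $\N^+\cup\N^-$ (via the relation $[T,L]=fL$ and $T(\ub)=-f\ub$) and then invoking the characteristic initial value problem in the domain of dependence together with the Carleman/unique-continuation machinery of Proposition~\ref{Carleman} in the complementary region. You instead use the pointwise rigidity of Killing fields: at any $p\in\S$ you show $W=[T,K]$ and $\nabla W$ both vanish, the latter via the block structure of $\nabla K|_p$ and $\nabla T|_p$ with respect to $T_p\S\oplus N_p\S$ (total geodesy of $\S$, plus the one-dimensionality of antisymmetric endomorphisms of the Lorentzian normal plane). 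This is more elementary and sidesteps the PDE machinery entirely. For the closing of orbits, the paper picks a specific zero $p$ of $T|_\S$, sets $\lambda=f(p)$ from $[T,L]=fL$, and shows $Z=T-\lambda K$ vanishes along the null geodesic from $p$ so that $\psi_{t_0}$ fixes it pointwise and has identity differential at $p$; you instead match the boost part of $d\phi^T_{p_0}$ on $N_p\S$ against the boost flow $d\phi^K_s$. Both routes ultimately rest on the same fact (an isometry is determined by its $1$-jet at a point, Lemma~\ref{lemma2}), but your formulation is more invariant while the paper's makes the constant $\lambda$ explicit in terms of horizon data. One small point worth tightening: your assertion that $T|_\S$ generates a \emph{periodic} flow uses that $\S$ is a $2$-sphere (this is Lemma~\ref{lemma1} in the paper), and your constancy-of-$\kappa$ claim is most cleanly justified by the explicit form $K=\ub L$ on $\N^+$ from Theorem~\ref{first}, which gives $\kappa=1$ directly.
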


Although, we don't make the non-expanding assumption on the
horizon, it's a well known fact that the non-expansion is a
consequence of the fact that  the Killing vector field $T$ is
tangent to $\N^+ \cup \N^-$. So the first theorem will produce a
Hawking vector field $K$ in a full neighborhood of $S$. The
rotational vector field can be written as a linear combination of
$T$ and $K$, i.e. we show that the existence a constant $\la$ such
that
$$Z = T + \la K$$
is a rotation with period $t_0$. So the part $\LL_Z F =0$ in the
theorem follows immediately. In the proof, we will focus on other
parts of the theorem. The period $t_0$ is determined on the
bifurcate sphere $\S$, while to determine $\la$, we need the
information on $S$ and the information of one particular null
geodesic on $\N^+ \cup \N^-$, see the proof for more details.

Once more, under the restrictive additional assumption of
analyticity of the space-time $(\M, g)$, this second theorem is
also known for Einstein vacuum space-times. It's usually called
Hawking's rigidity theorem, see \cite{HE}, which asserts that
under some global causality, asymptotic flatness and connectivity
assumptions, a stationary, non-degenerate analytic space-time must
be axially symmetric. In the smooth category, one can find a proof
in \cite{AIK} based on the idea that, under a suitable conformal
rescaling of null generators on the bifurcate sphere, the level
sets of the affine parameters of the null generators on the
horizon should represent the integrable surface ruled out by the
closed rotational orbits. We will give a more geometric
construction.

These two theorems play an important role in the classification
theory of stationary black holes, since they reduce the
classifications to the cases which are covered by the well-known
uniqueness theorems for electrovac black holes in general
relativity, see \cite{I}, \cite{HE}.\\

We now describe the main ideas of the proofs. The first step is to
construct the Hawking vector field $K$. Since $K$ is a Killing
vector field, it must satisfy the following covariant linear wave
equations:
\begin{equation}\label{equationK}
 \square_g K_\a = -R_\a{}^\b K_\b
\end{equation} 
where $R_{\a\b}$ is the Ricci curvature tensor for the Lorentzian
metric $g$. We hope to reconstruct $K$ by solving this wave
equation. This is precisely the strategy used in \cite{FRW}. The
equation can be solved in the domain of dependence if initial data
is prescribed on the characteristic hypersurfaces, see
\cite{Rendall} for a proof. The choice of initial data can be
rediscovered by the following heuristic argument: because $K$ is
Killing, its restriction on a geodesic should be a Jacobi field,
so it's reasonable to guess the initial data on $\N^+$ should be
the non-trivial parallel Jacobi field $\ub L$ where $L$ is one
null geodesic generator on $\N^+$ and $\ub$ is the corresponding
affine parameter, i.e. $L(\ub)=1$; another way to guess the
initial data is to check the explicit formula for the exact
Kerr-Newman solutions. While the Cauchy problem for
(\ref{equationK}) is ill-posed on complement of the domain of
dependence, solving (\ref{equationK}) can not construct the
Hawking vector field in bad region. We have to rely on the new
techniques used in \cite{AIK}. A careful calculation shows $K$
also solves an ordinary differential equation which is well-posed
in the ill-posed region for (\ref{equationK}). So one can extend
$K$ into the bad region by solving this ordinary equation. That's
how we construct $K$ in a full neighborhood of $\S$. Notice that
although $K$ is constructed, it's not automatically a Killing
vector field. One turns to prove the one parameter group $\phi_t$
generated by $K$ acts isometrically. We need to show that, for
each small $t$, the pull-back metric $\phi_t^* g$ must coincide
with $g$, in view of the fact that they are both solutions of
Einstein-Maxwell equations and coincide on $\N^+ \cup \N^-$. Now
the uniqueness for metric type problems come into play. The
results of Ionescu-Klainerman \cite{IK}, \cite{IK2}, Alexakis
\cite{Al} and Alexakis-Ionescu-Klainerman
\cite{AIK} provide hints to the answer.\\

The paper is organized as follows. In section 2, we construct a
canonical null frame associated to the bifurcate horizon $(\S,
\N^+, \N^-)$ and derive a set of partial differential equations
for various geometric quantities, as consequences of non-expasion
condition and the Einstein-Maxwell equations; in section 3, we
give a self-contained proof of Theorem \ref{first} in the domain
of dependence of $\N^+ \cup \N^-$, which is the Proposition B.1 in
\cite{FRW}; in section 4, based on the Carleman estimates proved
in \cite{IK} and \cite{IK2}, we extend the Hawking vector field to
a full neighborhood of $\S$ which completes the proof of Theorem
\ref{first}.; the last section is devoted to a geometric proof of
Theorem \ref{second}.

\medskip

{\bf Acknowledgements}:\quad The author would like to thank
Professor Sergiu Klainerman for suggesting the problem; and Willie
Wai-Yeung Wong for valuable discussions.

\section{Preliminaries}\label{prelim}
In this paper, the indices $\a,\b,\ga,\de,\rho$ are from $1$ to
$4$, $a,b,c$ are from $1$ to $2$; the curvature convention is
$R_{\a\b\ga\de} = g(D_\a D_\b e_\ga-D_\b D_\a e_\ga, e_\de)$,
where $D_\a D_\b X = D_\a( D_\b X)-D_{D_\a e_\b}X$; repeat indices
are always understood as Einstein summation convention; since
during the proof of our main theorems, we will keep shrinking the
open neighborhood $\O$ of $\S$ mentioned in the introduction, we
keep denoting such neighborhoods by $\O$ for simplicity.

One can choose a smooth future-directed null pair $(L,\Lb)$ along
$\S$ with normalization
\begin{equation*}
 g(L,L)=g(\Lb,\Lb)=0,\quad g(L,\Lb)=-1
\end{equation*}
such that $L$ is tangent to $\N^+$ and $\Lb$ is tangent to $\N^-$.
In a small neighborhood of $\S$, we extend $L$ along the null
geodesic generators of $\N^+$ via parallel transport; we also
extend $\Lb$ along the null geodesic generators of $\N^-$ via
parallel transport. So $D_L L = 0$ and $D_\Lb \Lb =0$. We now
define two optical functions $u$ and $\ub$ near $\S$. The function
$\ub$ (resp. $u$) is defined along $\N^+$ (resp. $\N^-$) by
setting initial value $\ub =0$(resp. $u=0$) on $\S$ and solving
$L(\ub)=1$ (resp. $\Lb(u)=1$). Let $\S_{\ub}$ (resp. $\S_u$) be
the level surfaces of $\ub$(resp. $u$) along $\N^+$ (resp.$\N^-$).
We define $\Lb$ (resp. $L$) on each point of the hypersurface
$\N^+$ (resp. $\N^-$) to be unique, future directed null vector
orthogonal to the surface $\S_{\ub}$ (resp. $\S_{u}$) passing
though that point and such that $g(L,\Lb)=-1$. The null
hypersurface $N^-_{\ub}$ (resp. $N^+_u$) is defined to be the
congruence of null geodesics initiating on $\S_{\ub} \subset \N^+$
(resp. $\S_{u} \subset \N^-$)in the direction of $\Lb$ (resp.
$L$). We require the null hypersurfaces $N^-_{\ub}$ (resp.
$N^+_{u}$) are the level sets of the function $\ub$ (resp. $u$),
by this condition, $u$ and $\ub$ are extended into a neighborhood
of $\S$ from the null hypersurface $\N^+ \cup \N^-$. The we can
extend both $L$ and $\Lb$ into a neighborhood of $\S$ as gradients
of the optical functions
\begin{equation*}
  L=-\g^{\mu\nu}\p_\mu u\p_\nu,\quad  \Lb=-\g^{\mu\nu}\p_\mu\ub \p_\nu.
\end{equation*}
Since $u$ and $\ub$ are null optical functions, we know
\begin{equation*}
g(L,L)=g(\Lb,\Lb)=0
\end{equation*}
while $g(L,\Lb)=-1$ only holds on the null surface $\N^+ \cup
\N^-$. Moreover, we have
\begin{equation*}
L(\ub)=1 \quad \mbox{on} \quad \N^+, \qquad \Lb(u)=1 \quad
\mbox{on} \quad \N^-.
\end{equation*}
 We define $S_{u\ub} = N^+_u \cap N^-_{\ub}$.
Using the null pair $(L,\Lb)$ one can choose a null frame $\{e_1.
e_2, e_3 = \Lb, e_4 =L\}$ such that
$$g(e_a,e_b)=\delta_{ab}, \qquad g(e_a,e_3)=g(e_a,e_4)=0,\quad
a,b=1,2.$$ At each point $p\in S_{u\ub} \subset \O$, $e_1, e_2$
form an orthonormal frame along the 2-surface $S_{u\ub}$. We will
modify the frame by Fermi transport later. Recall the null second
fundamental forms $\chi$, $\chib$ and torsion $\ze$ are defined on
$\N^+ \cup \N^-$ via the given null pair $(L,\Lb)$:
\begin{equation*}
\chi_{ab} = g(D_{e_a}L,e_b), \quad \chib_{ab} = g(D_{e_a}\Lb,e_b),
\quad \ze_a = g(D_{e_a}L,\Lb).
\end{equation*}
The traces of $\chi$ is defined by $tr\chi = \chi^a{}_a$,
similarly for $tr\chib$
\begin{definition}\label{nonexpansion}
We say that $\N^+$ is non-expanding if $tr\chi=0$ on $\N^+$;
similarly  $\N^-$ is non-expanding if  $tr\chib=0$ on $\N^-$. The
bifurcate horizon $(\S, \N^+, \N^-)$ is called non-expanding if
both $\N^+, \N^-$ are non-expanding.
\end{definition}
The non-expansion condition has a very strong restriction on the
geometry of the Einstein-Maxwell space-time. We recall the
Einstein-Maxwell equations:
\begin{equation*}
\left\{ \begin{array}{rl}
    R_{\a\b} - \frac{1}{2} R g_{\a\b} &= T_{\a\b}\\
        D_{[\a}F_{\b\ga]} &=0\\
    D^\a F_{\a\b}&=0
        \end{array}\right.
\end{equation*}
where $T_{\a\b} = F_\a{}^\mu
F_{\b\mu}-\frac{1}{4}g_{\a\b}F^{\mu\nu}F_{\mu\nu}$ is the
energy-momentum tensor for the corresponding electromagnetic
field. Since the dimension of the underlying manifold is $4$, the
field theory is conformal, i.e. $tr T =0$. So by tracing the first
equation in the system, we know the scalar curvature $R=0$. We can
rewrite the system as
\begin{equation}\label{E-M}
\left\{ \begin{array}{rl}
    R_{\a\b} &= F_\a{}^\mu F_{\b\mu}-\frac{1}{4}g_{\a\b}F^{\mu\nu}F_{\mu\nu}\\
        D_{[\a}F_{\b\ga]} &=0\\
    D^\a F_{\a\b}&=0
        \end{array}\right.
\end{equation}
We recall that the positive energy condition is valid for
Einstein-Maxwell energy-momentum tensor, i.e.
$$T(X, Y)\geq 0$$
where $(X, Y)$ are an arbitrary pair of future-directed causal
vectors. Let $\hat{\chi}$ be the traceless part of $\chi$, so on
$\N^+$, according to Raychaudhuri equation:
\begin{equation*}
L(tr\chi)=-R_{LL}-|\hat{\chi}|^2-\frac{1}{2}(tr\chi)^2
\end{equation*}
So non-expansion condition on the black hole boundary implies
\begin{equation*}
R_{LL}+|\hat{\chi}|^2 =0
\end{equation*}
One can take advantage of the positive energy condition to
conclude
\begin{equation*}
R_{LL}=0, \qquad \hat{\chi} = 0 \qquad \text{on} \quad \N^+.
\end{equation*}
So $\chi = 0$ on $\N^+$. According to untraced formulation of
Raychaudhuri equation:
\begin{equation*}
L(\chi)+\chi^2+R(-,L)L=0
\end{equation*}
we know for all $X \in T\N^+$,
$$R(X,L)L=0$$
In view of the first equation in (\ref{E-M}), $R_{LL}=0$ implies
$F_{4a}=0$, and this last vanishing quantities imply $R_{4a}=0$,
combined with $R(X,L)L=0$, we know $R_{4aba}=0$. To summarize, the
non-expansion condition implies, on the null hypersurface $\N^+$
\begin{equation}\label{nonexpanding}
\left\{ \begin{array}{rl}
    \chi &= 0\\
        R_{4a}&=0\\
    R_{4aba}&=0\\
    R_{344a} &=0\\
    F_{4a}&=0
        \end{array}\right.
\end{equation}
Similar identities hold on $\N^-$ by replacing the index $4$ by
$3$. It's precisely this set of geometric information that we use
in the proof of our main theorems. Recall also our choice of the
frame ${e_1,e_2}$ is arbitrary on $\N^+$. Since we know $\chi =0$,
we can make this choice more rigid by using Fermi transport along
$L$, i.e. we first pick up an local orthonormal basis on $\S$, the
use the Lie transport relation $\LL _L e_a=0$ to get a basis on
$\S_{\ub}$ (which needs not to be orthonormal), the vanishing of
$\chi$ on $\N^+$ guarantees $\{e_1,e_2\}$ is still an orthonormal
basis. We summarize the computation formulas in the null frame
$\{e_1. e_2, e_3 = \Lb, e_4 =L\}$ on $\N^+$:
\begin{equation}\label{Christoffel}
\left\{ \begin{array}{rl}
    D_L L =0, \qquad &D_{e_a} L = -\ze_a L, \\
        D_L \Lb = -\ze_a e_a \quad& D_{e_a} \Lb = \chib_{ab}e_b + \ze_a \Lb\\
    D_L e_a = -\ze_a L  \quad& D_{e_a}e_b = \nabla_{e_a}e_b + \chib_{ab} L
        \end{array}\right.
\end{equation}
where $\nabla_{e_a}e_b$ is the projection of $D_{e_a}e_b$ onto the
surface $\S_u$. A similar set of identities hold on $\N^-$.

\begin{lemma}\label{Rllb}
On $\N^+$, we have
\begin{equation*}
R(-,L,\Lb,-)=-D\ze - \nabla_L \chib + \ze \otimes \ze
\end{equation*}
i.e. for all $X,Y \in T\S_{\ub}$,
\begin{equation*}
R(X,L,\Lb,Y)=-(D\ze)(X,Y) - (\nabla_L \chib)(X,Y) + \ze(X)\ze(Y).
\end{equation*}
where $\nabla$ denotes the restriction of $D$ on $\S_{\ub}$;
similar result holds on $\N^-$.
\end{lemma}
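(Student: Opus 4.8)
The plan is to evaluate $R(X,L,\Lb,Y)$ directly from the curvature convention $R_{\a\b\ga\de}=g(D_\a D_\b e_\ga-D_\b D_\a e_\ga,e_\de)$, using only the structure equations (\ref{Christoffel}) valid on $\N^+$. By multilinearity it suffices to take $X=e_a$ and $Y=e_b$ in the Fermi-transported frame $\{e_1,e_2,e_3=\Lb,e_4=L\}$, in which case the convention unwinds to
\[
R(e_a,L,\Lb,e_b)=g\big(D_{e_a}(D_L\Lb)-D_{D_{e_a}L}\Lb-D_L(D_{e_a}\Lb)+D_{D_L e_a}\Lb,\ e_b\big);
\]
the two inner terms here are the second-derivative corrections prescribed by the convention, and every vector field on the right-hand side is supplied by (\ref{Christoffel}): $D_L\Lb=-\ze_c e_c$, $D_{e_a}\Lb=\chib_{ac}e_c+\ze_a\Lb$, $D_{e_a}L=-\ze_a L$, $D_L e_a=-\ze_a L$, and $D_{e_a}e_c=\nabla_{e_a}e_c+\chib_{ac}L$.

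I would then substitute and collect terms. Expanding $D_{e_a}(D_L\Lb)=D_{e_a}(-\ze_c e_c)$ and pairing with $e_b$ produces $-e_a(\ze_b)-\ze_c\,g(\nabla_{e_a}e_c,e_b)$, the $L$-valued piece $-\ze_c\chib_{ac}L$ being killed by $g(\cdot,e_b)$; the correction $D_{D_{e_a}L}\Lb=-\ze_a D_L\Lb=\ze_a\ze_c e_c$ pairs to $\ze_a\ze_b$. Likewise $D_L(D_{e_a}\Lb)=D_L(\chib_{ac}e_c+\ze_a\Lb)$, after discarding the $L$- and $\Lb$-valued pieces, pairs to $L(\chib_{ab})-\ze_a\ze_b$, while the correction $D_{D_L e_a}\Lb=-\ze_a D_L\Lb=\ze_a\ze_c e_c$ pairs to $\ze_a\ze_b$. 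Assembling the four contributions with the signs shown above yields
\[
R(e_a,L,\Lb,e_b)=-e_a(\ze_b)-\ze_c\,g(\nabla_{e_a}e_c,e_b)-L(\chib_{ab})+\ze_a\ze_b .
\]

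Finally I would reinterpret the first two terms as a covariant derivative on $\S_{\ub}$. Because the induced connection $\nabla$ on $\S_{\ub}$ is metric and $g(e_b,e_c)$ is constant along the surface, $g(\nabla_{e_a}e_c,e_b)=-g(e_c,\nabla_{e_a}e_b)$, so
\[
-e_a(\ze_b)-\ze_c\,g(\nabla_{e_a}e_c,e_b)=-\big(e_a(\ze_b)-\ze_c\,g(\nabla_{e_a}e_b,e_c)\big)=-(\nabla_{e_a}\ze)(e_b),
\]
which is exactly $-(D\ze)(e_a,e_b)$ in the notation of the statement. For the remaining term, the $\S_{\ub}$-projection of $D_L e_a=-\ze_a L$ vanishes, i.e.\ $\nabla_L e_a=0$, so $L(\chib_{ab})=(\nabla_L\chib)(e_a,e_b)$. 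This gives $R(e_a,L,\Lb,e_b)=-(D\ze)(e_a,e_b)-(\nabla_L\chib)(e_a,e_b)+\ze_a\ze_b$, as asserted, and the identity on $\N^-$ follows by the symmetric argument with $L\leftrightarrow\Lb$ (equivalently the frame indices $4\leftrightarrow 3$) and the structure equations on $\N^-$.

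The computation is mechanical, so I do not expect a genuine obstacle; the only points that need care are the signs of the two correction terms in the curvature convention — here they contribute equal and opposite amounts to the $\ze\otimes\ze$ coefficient, so the $\ze\otimes\ze$ surviving in the final formula comes entirely from the honest double covariant derivative $D_L(D_{e_a}\Lb)$ through its $\ze_a D_L\Lb$ summand — and the bookkeeping that every $L$- and $\Lb$-valued intermediate term is removed by the final contraction against $e_b\in T\S_{\ub}$.
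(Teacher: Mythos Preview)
Your proof is correct and follows essentially the same route as the paper: both expand $R(X,L,\Lb,Y)$ directly from the stated curvature convention, substitute the structure relations (\ref{Christoffel}), and identify the resulting expression with $-(D\ze)-\nabla_L\chib+\ze\otimes\ze$. The only cosmetic difference is that the paper keeps general $X,Y\in T\S_{\ub}$ throughout while you specialize to the Fermi-transported frame $e_a,e_b$; the term-by-term bookkeeping (including your observation that the two correction terms contribute cancelling $\pm\ze_a\ze_b$ so that the surviving $\ze\otimes\ze$ comes from $\ze_a D_L\Lb$ inside $D_L(D_{e_a}\Lb)$) is the same computation.
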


\begin{proof}
For $X,Y \in T\S_{\ub}$, we have
\begin{align*}
 R(X,L,\Lb,Y) &=g(D_X D_L \Lb,Y)-g(D_L D_X \Lb, Y)-g(D_{D_X L}\Lb,Y)+g(D_{D_L X}\Lb,Y)\\
          &=g(D_X \ze^\sharp, Y)-g(D_L(\chi(X)+\ze(X)\Lb),Y)\\
          &\quad +\ze(X)g(D_L \Lb, Y)+g(D_{\nabla_L X-\ze(X)L}\Lb,Y)\\
          &=-(D\ze)(X,Y)-g(D_L(\chib(X)),Y)+g(D_{\nabla_L X}\Lb,Y)-\ze(X)g(D_L \Lb,Y)\\
          &=-(D\ze)(X,Y) - (\nabla_L \chib)(X,Y) + \ze(X)\ze(Y).
\end{align*}
\end{proof}
\section{Hawking vector field inside black hole} \label{inside}
We define the following  four regions $\I^{++}$, $\I^{--}$, $\I^{+-}$ and $\I^{-+}$:
\begin{equation}
\begin{split}
&\I^{++}=\{p\in\O | u(p)\geq 0\text{ and }\ub(p)\geq 0\},\quad \I^{--}=\{p\in\O|u(p)\leq 0\text{ and }\ub(p)\leq 0\},\\
&\I^{+-}=\{p\in\O | u(p)\geq 0\text{ and }\ub(p)\leq 0\},\quad \I^{-+}=\{p\in\O|u(p)\leq 0\text{ and }\ub(p)\geq 0\}.
\end{split}
\end{equation}
In this section, we will prove the following proposition
\begin{proposition}\label{interior}
Under the assumptions of Theorem \ref{first}, in a small
neighborhood  $\O$ of $\S$, there exists a smooth Killing vector
field $K$ in $\O \cap (\I^{++} \cup \I^{--} )$ such that
\begin{equation*}
K=\ub L-u\Lb \quad\text{ on }(\N^+ \cup \N^-) \cap \O.
\end{equation*}
Moreover, $\LL_K F =0$ and $[\Lb,K]=-\Lb$.
\end{proposition}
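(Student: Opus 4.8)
The plan is to reconstruct $K$ exactly as in \cite{FRW}: solve the characteristic initial value problem for the covariant linear wave equation \eqref{equationK} with data adapted to the horizon, and then show that the vector field so obtained is Killing and annihilates $F$ by proving that the tensors $\pi := \LL_K g$ and $\psi := \LL_K F$ vanish. First I would prescribe the characteristic data $K = \ub L$ on $\N^+$ and $K = -u\,\Lb$ on $\N^-$ (the heuristic being that the restriction of a Killing field to a null generator is a parallel Jacobi field $\ub L$, or by inspection of Kerr--Newman). Since $u\equiv 0$ on $\N^+$ and $\ub\equiv 0$ on $\N^-$, these two pieces agree on $\S$ --- both vanish there --- and so constitute admissible smooth characteristic data for \eqref{equationK} on $\N^+\cup\N^-$. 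Invoking the local well-posedness of the characteristic problem for linear wave equations (Rendall), equation \eqref{equationK} has a unique smooth solution $K$ in a neighborhood of $\S$ inside the domain of dependence $\I^{++}\cup\I^{--}$ of $\N^+\cup\N^-$; after shrinking $\O$ this neighborhood becomes $\O\cap(\I^{++}\cup\I^{--})$, and $K = \ub L - u\Lb$ on $(\N^+\cup\N^-)\cap\O$ by construction.

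Next, applying $\LL_K$ to the Einstein--Maxwell system \eqref{E-M} and using \eqref{equationK}, the contracted Bianchi identity and $R=0$, I would derive a closed homogeneous linear system for $(\pi,\psi)$. The mechanism: because $K$ solves \eqref{equationK}, the divergence collapses to $D^\a\pi_{\a\b} = \tfrac12 D_\b(\mathrm{tr}\,\pi)$, so the linearization of the Ricci tensor applied to $\pi$ reduces to $-\tfrac12\square_g\pi_{\a\b}$ modulo curvature, while $\LL_K R_{\a\b} = \LL_K T_{\a\b}$ is algebraic in $(\pi,\psi)$; hence $\square_g\pi_{\a\b}$ is linear and homogeneous in $(\pi,\psi)$ with smooth coefficients, and Lie-differentiating $dF=0$ and $D^\a F_{\a\b}=0$ produces the companion equation for $\psi$ (with source linear in $\pi$ and $D\pi$). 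By the uniqueness half of the characteristic existence theorem it then suffices to show $\pi\equiv 0$ and $\psi\equiv 0$ on $\N^+\cup\N^-$.

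The delicate point is that $\pi|_{\N^+}$ involves the transversal derivative $D_\Lb K$ along $\N^+$, which is not prescribed but produced by \eqref{equationK}. I would observe that, written in the null frame, equation \eqref{equationK} restricted to $\N^+$ becomes a linear transport equation along the generator $L$ for $D_\Lb K$: the only surviving second-order term is $D_L D_\Lb K$, the commutator $[D_L,D_\Lb]$ is curvature, and every remaining term is tangential and controlled through \eqref{Christoffel}, Lemma \ref{Rllb} and the fact that $L,\Lb$ are gradients of optical functions. The components $\pi_{ab},\pi_{a4},\pi_{44}$ of $\pi|_{\N^+}$ already vanish just from $K|_{\N^+}=\ub L$ and \eqref{Christoffel}; feeding the transport equation for $D_\Lb K$ into $\pi_{33},\pi_{3a},\pi_{34}$ and into the frame components of $\psi$, and using the non-expansion relations \eqref{nonexpanding} --- the vanishing of $\chi$, $R_{4a}$, $R_{4aba}$, $R_{344a}$, $F_{4a}$ on $\N^+$ --- one expects a closed homogeneous linear system of transport equations along the generators of $\N^+$ for all the components in question. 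Since $K|_\S=0$, $D_L K|_\S=L$, $D_\Lb K|_\S=-\Lb$ and $D_{e_a}K|_\S=0$ (read off from the $\N^-$ data using $D_\Lb\Lb=0$), one checks directly that $\pi$ and $\psi$ vanish on $\S$, hence identically on $\N^+$ by ODE uniqueness; the argument on $\N^-$ is the mirror image with the indices $3$ and $4$ interchanged. Uniqueness of the wave system then yields $\pi\equiv 0$ and $\psi\equiv 0$ in $\O\cap(\I^{++}\cup\I^{--})$, i.e.\ $K$ is Killing and $\LL_K F=0$. I expect the main obstacle to be exactly the verification that this horizon transport system is genuinely homogeneous --- that all inhomogeneities are annihilated by the non-expansion identities \eqref{nonexpanding} together with the Einstein--Maxwell equations --- which is a somewhat long frame computation and is the place where the non-expansion hypothesis is used essentially.

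Finally, for $[\Lb,K]=-\Lb$: now that $K$ is Killing and $\Lb = -g^{\mu\nu}\p_\mu\ub\,\p_\nu = -\mathrm{grad}_g\,\ub$, we have $\LL_K\Lb = -\mathrm{grad}_g(K\ub)$. From $\LL_K g=0$ and $|d\ub|_g^2=0$ we get $g\big(\mathrm{grad}(K\ub),\mathrm{grad}\,\ub\big)=\tfrac12 K\big(|d\ub|_g^2\big)=0$, hence $\Lb(K\ub)=0$; since also $\Lb(\ub)=-|d\ub|_g^2=0$, the function $K\ub-\ub$ is constant along the integral curves of $\Lb$. These curves are the null generators of the level sets of $\ub$ and each passes through $\N^+$, where $K\ub=\ub L(\ub)=\ub$; therefore $K\ub\equiv\ub$, so $\LL_K\Lb=-\mathrm{grad}\,\ub=\Lb$, i.e.\ $[\Lb,K]=-\Lb$.
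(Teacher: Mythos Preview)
Your proposal is correct and follows essentially the same route as the paper: construct $K$ by the characteristic problem for \eqref{equationK}, derive the closed linear system for $(\pi,\LL_K F)$, and reduce to showing both vanish on $\N^+\cup\N^-$ via transport along the generators using the non-expansion identities \eqref{nonexpanding}; the paper carries out exactly this frame computation (Claims~\ref{pi34}--\ref{LKFF}), and your identification of the main obstacle is accurate --- in fact for $\pi_{33}$ and for $\LL_K F_{\Lb b}$ the paper has to differentiate twice along $L$ (using $LL(\mathrm{tr}\,\chib)=0$ and $L(D_L F_{\Lb b})=0$, both of which consume the Maxwell equations) before the transport closes.

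The one genuine difference is your argument for $[\Lb,K]=-\Lb$. The paper sets $W=[\Lb,K]+\Lb$, derives the linear ODE $D_\Lb W=-D_W\Lb$ from the Killing property and $D_\Lb\Lb=0$, checks $W=0$ on $\N^+$, and integrates. Your argument instead exploits that $\Lb=-\mathrm{grad}_g\ub$: since $K$ is Killing, $\LL_K\Lb=-\mathrm{grad}_g(K\ub)$, and then the optical equation $|d\ub|_g^2=0$ forces $\Lb(K\ub-\ub)=0$, so the scalar $K\ub-\ub$ propagates from its value $0$ on $\N^+$ along the $\Lb$-generators that, by the very construction of $\ub$ in Section~\ref{prelim}, foliate the neighborhood. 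This is a cleaner variant that reduces the vector ODE to a scalar constancy statement; the paper's version, on the other hand, uses only $D_\Lb\Lb=0$ and would survive if $\Lb$ were merely a null geodesic field rather than a gradient.
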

The region $\O \cap (\I^{++} \cup \I^{--})$ is the domain of
dependence of $\N^+ \cup \N^-$. As we mentioned in the
introduction, by using the Newman-Penrose formalism, the first
part of the proposition is shown by H. Friedrich, I. R\'{a}cz and
R. Wald, see \cite{FRW}. For the sake of completeness, we provide
a direct proof without Newman-Penrose formalism. As mentioned in
the introduction, we consider the following characteristic initial
value problem
\begin{equation} \label{eqK}
 \left\{ \begin{array}{rl}
    \square_g K_\a &= -R_\a{}^\b K_\b \\
        K&=\ub L-u\Lb \quad\text{ on }(\N^+ \cup \N^-) \cap \O
        \end{array}\right.
\end{equation}
According to \cite{Rendall}, it's well-posed in $\O \cap (\I^{++}
\cup \I^{--})$. So a smooth vector field $K$ is now constructed in
the domain of dependence of $\N^+ \cup \N^-$. To show $K$ is
indeed a Killing vector field, one has to show the deformation
tensor of $K$
\begin{equation*}
\pi_{\a\b} = \LL_K g =D_\a K_\b + D_\b K_\a
\end{equation*}
is zero in $\O \cap (\I^{++} \cup \I^{--})$.

Since $K$ solves (\ref{eqK}), by commuting derivatives, we know
the deformation tensor $\pi_{\a\b}$ solves the following covariant
wave equation:
\begin{equation*}
\square_g \pi_{\a\b}=-2R^\rho{}_{\a\b}{}^\de
\pi_{\rho\de}+R_{\a\rho}\pi^{\rho}{}_\b+R_{\b\rho}\pi^{\rho}{}_\a
-2\LL_K R_{\a\b}
\end{equation*}
The geometric part of Einstein-Maxwell equations (\ref{E-M})
provides
\begin{align*}\label{LKR}
 \LL_K R_{\a\b} &= \LL_K T_{\a\b}\\
        &= F_\a{}^\rho \LL_K F_{\b\rho} + F_\b{}^\rho \LL_K F_{\a\rho} -\pi_{\rho\de}F_\a{}^\rho F_\b{}^\de\\
        &\quad -\frac{1}{4}\pi_{\a\b}F_{\mu\nu} F^{\mu\nu}-\frac{1}{2}g_{\a\b} F^{\mu\nu} \LL_K F_{\mu\nu}+\frac{1}{2}g_{\a\b}\pi_{\rho\de}F^\de{}_\ga F^{\rho\ga}
\end{align*}
This formula requires one to consider the partial differential
equations satisfied by $\LL_K F_{\a\b}$, which follows directly
from the electromagnetic part of the Einstein-Maxwell equations
(\ref{E-M}):
\begin{equation*}
\left\{ \begin{array}{rl}
        D_{[\a} \LL_K F_{\b\ga]} &=0\\
    D^\a \LL_K F_{\a\b} &=\pi_{\a\ga}D^\ga F^\a{}_\b + \frac{1}{2}(D_\a \pi_{\b\ga}+D_\b \pi_{\a\ga}-D_\ga \pi_{\a\b})
        \end{array}\right.
\end{equation*}
Put all the equations together, we know $\pi_{\a\b}$ and $\LL_K
F_{\a\b}$ solve the characteristic initial value problem for the
following closed symmetric hyperbolic system:
\begin{equation}\label{pi}
 \left\{ \begin{array}{rl}
        \square_g \pi_{\a\b} &=-2R^\rho{}_{\a\b}{}^\de \pi_{\rho\de}+R_{\a\rho}\pi^{\rho}{}_\b+R_{\b\rho}\pi^{\rho}{}_\a \\
                 &\quad -2(F_\a{}^\rho \LL_K F_{\b\rho} + F_\b{}^\rho \LL_K F_{\a\rho} -\pi_{\rho\de}F_\a{}^\rho F_\b{}^\de)\\
                 &\quad +\frac{1}{2}\pi_{\a\b}F_{\mu\nu} F^{\mu\nu}+g_{\a\b} F^{\mu\nu} \LL_K F_{\mu\nu}-g_{\a\b}\pi_{\rho\de}F^\de{}_\ga F^{\rho\ga}\\
    D_{[\a} \LL_K F_{\b\ga]} &=0\\
    D^\a \LL_K F_{\a\b} &=\pi_{\a\ga}D^\ga F^\a{}_\b + \frac{1}{2}(D_\a \pi_{\b\ga}+D_\b \pi_{\a\ga}-D_\ga \pi_{\a\b})
        \end{array}\right.
\end{equation}
So to show $\pi_{\a\b}=0$ and $\LL_K F=0$ in $\O$, it suffices to show
\begin{equation}\label{bdry}
 \pi_{\a\b}=0 \qquad \LL_K F=0 \qquad \text{on} \quad \N^+ \cup \N^-.
\end{equation}
We only check (\ref{bdry}) on $\N^+$; on $\N^-$, the argument is
exactly the same. In view of the expression of $K = \ub L$ on
$\N^+$ (since $u=0$ on it) and (\ref{Christoffel}), it's easy to
see
\begin{equation}\label{DK}
\left\{ \begin{array}{rl}
    D_a K_b = D_4 K_a &= D_a K_4 = D_4 K_4=0, \qquad D_4 K_3 =-1\\
        D_c D_a K_b = D_4 D_a K_b &= D_b D_4 K_a = D_4 D_4 K_a = D_a D_b K_4 = 0\\
    D_4 D_a K_4 = D_a D_4 K_4 &= D_4 D_4 K_4 = D_4 D_4 K_3 = D_a D_4 K_3 =0.
        \end{array}\right.
\end{equation}
So one knows each component of $\pi_{\a\b}$, which does not have
the bad direction $\Lb$, is zero, i.e.
\begin{equation}\label{gooddirection}
 \pi_{ab}=\pi_{4a}=\pi_{44}=0 \quad \text{on} \quad \N^+
\end{equation}
To prove the remaining components of $\pi$ vanish, we need to make
a serious use of (\ref{eqK}) to get derivatives in $\Lb$
direction. The equation (\ref{eqK}) gives
$$D_3 D_4 K_\b + D_4 D_3 K_\b =\sum_{a=1}^{2} D_a D_a K_\b + R_\b{}^\rho K_\rho$$
Combine this with curvature identity $D_3 D_4 K_\b - D_4 D_3 K_\b
= -R_{34\b}{}^\rho K_\rho$, then we have
\begin{equation}\label{eqKL}
 2 D_4 D_3 K_\b =\sum_{a=1}^{2} D_a D_a K_\b + R_{\b\rho} K^\rho + R_{34\b\rho} K^\rho
\end{equation}

\begin{claim}\label{pi34}
 \qquad We have \quad $D_3 K_4=1, \quad  D_a D_3 K_4 = D_4 D_3 K_4=0$.
\end{claim}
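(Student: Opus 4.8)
The plan is to pin down $D_3 K_4$ as a function on all of $\N^+$ first, and then read off the two second-order identities as easy consequences. The point is that (\ref{DK}) and (\ref{gooddirection}) already control every derivative of $K$ along $\N^+$ that does not differentiate in the bad direction $e_3=\Lb$, whereas $D_3 K_4$ is the simplest quantity that is \emph{not} of this form; the strategy is that it satisfies a transport equation along the null generators of $\N^+$, obtained by specializing the commuted wave equation (\ref{eqKL}), so it is determined by its value on $\S$.

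For the initial value on $\S$ I would not use (\ref{eqK}) at all, but the $\N^-$ data: along $\N^-$ one has $K=-u\Lb$ with $\Lb$ parallel transported ($D_\Lb\Lb=0$) and $\Lb(u)=1$, so $D_{e_3}K=D_\Lb(-u\Lb)=-\Lb$ on $\N^-$, in particular at $\S$; pairing with $e_4=L$ and using $g(L,\Lb)=-1$ gives $D_3 K_4=1$ on $\S$.

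Next I would specialize (\ref{eqKL}) to $\b=4$. Since $K=\ub L=\ub e_4$ on $\N^+$, the Ricci term is $R_{4\rho}K^\rho=\ub R_{44}=\ub R_{LL}$, which vanishes by (\ref{nonexpanding}); the term $R_{34 4\rho}K^\rho=\ub R_{3444}$ vanishes by antisymmetry of the curvature tensor in its last pair of indices; and $\sum_a D_a D_a K_4=0$ by (\ref{DK}). Hence $D_4 D_3 K_4=0$ on $\N^+$. Expanding this Hessian component in the null frame via (\ref{Christoffel}) — the term through $D_L L=0$ drops, and the one through $D_L\Lb=-\ze_a e_a$ meets $D_a K_4=0$ from (\ref{DK}) — one is left with $L(D_3 K_4)=0$. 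Integrating along the generators of $\N^+$, which issue from $\S$, with the initial value $1$, gives $D_3 K_4\equiv 1$ on $\N^+$; this simultaneously records $D_4 D_3 K_4=0$. For the last identity, $D_3 K_4$ is now the constant function $1$, so $e_a(D_3 K_4)=0$, and in the frame expansion of $D_a D_3 K_4$ the contributions through $\chib_{ab}$ annihilate $D_b K_4=0$ while the two contributions proportional to $\ze_a$ — one from the $\ze_a\Lb$ part of $D_{e_a}\Lb$ and one from $D_{e_a}L=-\ze_a L$ — cancel against each other because $D_3 K_4=1$; hence $D_a D_3 K_4=0$.

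I do not expect a real obstacle here: once (\ref{DK}), (\ref{nonexpanding}) and (\ref{Christoffel}) are in hand the whole thing is bookkeeping. The only two places where genuinely new information enters are the value on $\S$, which uses the $\N^-$ structure and the parallel transport of $\Lb$ there, and the propagation, which uses the wave equation through (\ref{eqKL}); the only point that calls for care is keeping the directional (frame) derivatives separate from the covariant second derivatives when identifying $D_4 D_3 K_4$ with $L(D_3 K_4)$ and when unwinding $D_a D_3 K_4$.
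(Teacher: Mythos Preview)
Your proof is correct and follows essentially the same approach as the paper: set $\b=4$ in (\ref{eqKL}), check the right-hand side vanishes, identify $D_4 D_3 K_4$ with $L(D_3 K_4)$, and integrate from $\S$. You simply flesh out details the paper leaves implicit --- the paper asserts the initial value is $1$ and that ``the other identities are also easy to check,'' whereas you supply the $\N^-$ computation for the initial value and the frame expansion for $D_a D_3 K_4$.
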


\begin{proof}
We set $\b = 4$ in (\ref{eqKL}), it's easy to check the left hand
side of (\ref{eqKL}) is
\begin{equation*}
2 D_4 D_3 K_4 = 2L(D_3 K_4)
\end{equation*}
while the right hand side is $0$ by (\ref{nonexpanding}). So
$L(D_3 K_4) =0$ on $\N^+$. It implies the value of $D_3 K_4$ on
$\N^+$ is determined by its value on $\S$ which is $1$. The other
identities are also easy to check, this completes the proof of the
claim.
\end{proof}

Apparently, Claim \ref{pi34} implies $\pi_{34}=0$.

\begin{claim}\label{pi3a}
 \qquad We have \quad $D_a K_3= \ub \ze_a, \quad  D_3 K_a = -\ub \ze_a$.
\end{claim}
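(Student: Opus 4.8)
The plan is to establish the two identities by separate arguments, in the spirit of the proof of Claim \ref{pi34}. The identity $D_aK_3=\ub\ze_a$ concerns only tangential derivatives along $\N^+$, so I would verify it directly: since $u\equiv 0$ on $\N^+$ we have $K=\ub L$ there, and since $e_a$ is tangent to the level surface $\S_{\ub}$, on which $\ub$ is constant, $e_a(\ub)=0$. Hence by (\ref{Christoffel}), $D_{e_a}K=\ub\,D_{e_a}L=-\ub\ze_a L$, and contracting with $e_3=\Lb$ using $g(L,\Lb)=-1$ on $\N^+$ gives $D_aK_3=\ub\ze_a$. No differential equation is needed here.

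For $D_3K_a=-\ub\ze_a$ the derivative is in the transverse direction $\Lb$, so I would run a transport argument along the null generators. Specializing (\ref{eqKL}) to a tangential index $\b=c\in\{1,2\}$, the right-hand side vanishes on $\N^+$: $\sum_aD_aD_aK_c=0$ by (\ref{DK}); and since $K^\rho$ has only the component $K^4=\ub$ along $\N^+$, we get $R_{c\rho}K^\rho=\ub R_{c4}=0$ and $R_{34c\rho}K^\rho=\ub R_{34c4}=-\ub R_{344c}=0$ by (\ref{nonexpanding}) and the antisymmetry of the Riemann tensor. Thus $D_4D_3K_c=0$ on $\N^+$. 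Expanding this second covariant derivative with the connection coefficients $D_L\Lb=-\ze_ae_a$ and $D_Le_a=-\ze_aL$ from (\ref{Christoffel}) gives $D_4D_3K_c=L(D_3K_c)+\ze_aD_aK_c+\ze_cD_3K_4$; inserting $D_aK_c=0$ (from (\ref{DK})) and $D_3K_4=1$ (Claim \ref{pi34}), this collapses to the transport equation $L(D_3K_c)=-\ze_c$ on $\N^+$.

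It remains to integrate along each generator, for which I need the initial value on $\S$ and the triviality of the transport of $\ze$. On $\S$ we have $\ub=u=0$, so $K=0$ there; the transverse derivative $D_{e_3}K$ must be read from the description $K=-u\Lb$ valid on $\N^-$ (legitimate since $e_3=\Lb$ is tangent to $\N^-$ along $\S$), and using $\Lb(\ub)=0$ and $\Lb(u)=1$ on $\N^-$ one finds $D_{e_3}K=-\Lb$ on $\S$, hence $D_3K_c=g(-\Lb,e_c)=0$ on $\S$. For the torsion, $L(\ze_c)=g(D_LD_{e_c}L,\Lb)+g(D_{e_c}L,D_L\Lb)$; the first term is the curvature component $R(L,e_c,L,\Lb)=R_{4c43}=-R_{344c}=0$ by (\ref{nonexpanding}) and the symmetries of Riemann, and the second term vanishes because $D_{e_c}L=-\ze_cL$ and $D_L\Lb=-\ze_ae_a$ are orthogonal; so $L(\ze_c)=0$ on $\N^+$ (this is the standard torsion propagation equation on a non-expanding horizon). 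Now the function $D_3K_c+\ub\ze_c$ has $L$-derivative $-\ze_c+\ze_c+\ub\,L(\ze_c)=0$ and vanishes on $\S$, hence vanishes on all of $\N^+$; that is, $D_3K_c=-\ub\ze_c$. In particular $\pi_{3c}=D_3K_c+D_cK_3=0$.

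I expect the main obstacle to be the bookkeeping in two places: expanding $D_4D_3K_c$ through (\ref{Christoffel}) so that the source term $\ze_c$ appears with the correct sign — the extra term $\ze_cD_3K_4$, which has no analogue in the $\b=4$ case of Claim \ref{pi34}, is exactly what produces it — and correctly extracting the initial data on $\S$ by passing from the $\N^+$-description $K=\ub L$ to the $\N^-$-description $K=-u\Lb$. Once these are in hand, checking that $L(\ze_c)=0$ and that $R_{4c43}$ vanishes is routine given (\ref{nonexpanding}).
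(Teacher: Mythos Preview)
Your proposal is correct and follows essentially the same route as the paper: a direct computation for $D_aK_3$, then the transport argument for $D_3K_a$ obtained by setting $\b=b$ in \eqref{eqKL}, using $D_3K_4=1$ to produce the source $-\ze_a$, and invoking $L(\ze_a)=0$ to integrate. You spell out more of the intermediate bookkeeping (the expansion of $D_4D_3K_c$ via \eqref{Christoffel} and the initial value on $\S$ via the $\N^-$ description of $K$) than the paper does, but the logic is the same.
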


\begin{proof}
The first identity in the claim is easy to verify by direct
computations; we now prove the second one. We first prove that
\begin{equation}\label{Lzeta}
 L(\ze_a)=0.
\end{equation}
We use (\ref{nonexpanding}):
\begin{align}
  L(\ze_a) &= L(g(D_a L,\Lb))=g(D_a L,D_L \Lb)+g(D_L D_a L,\Lb)\\
       &= g(D_L D_a L,\Lb) = R_{LaL\Lb} = 0
\end{align}
We now set $\b = b$ in (\ref{eqKL}), it implies $D_4 D_3 K_b =0$,
then by using the fact that $D_3 K_4=1$, we can show
$$L(D_3 K_a)=-\ze_a$$
Combined with (\ref{Lzeta}), it shows $ D_3 K_a = -\ub \ze_a$.
\end{proof}
Apparently, Claim \ref{pi3a} implies $\pi_{3a}=0$.

\begin{claim}\label{pi33}
 \qquad We have \quad $\pi_{33} = 2 D_3 K_3=0$.
\end{claim}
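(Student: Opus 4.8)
The plan is to again extract the missing $\Lb$-direction information from the wave equation \eqref{eqKL} specialized to $\b=3$, exactly as in the proofs of Claims \ref{pi34} and \ref{pi3a}. First I would compute the left-hand side: since $K=\ub L$ on $\N^+$ and using \eqref{Christoffel}, one has $D_4 D_3 K_3 = L(D_3 K_3) - (\text{lower-order terms built from }D_3 K_\a\text{ and Christoffel symbols})$; the already-established values $D_3 K_4 = 1$ (Claim \ref{pi34}) and $D_3 K_a = -\ub\ze_a$ (Claim \ref{pi3a}), together with $D_L L = 0$ and $D_L e_a = -\ze_a L$, let me rewrite $2D_4 D_3 K_3$ as $2L(D_3 K_3)$ plus an explicitly computable correction. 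Then I would evaluate the right-hand side of \eqref{eqKL} with $\b=3$: the Laplacian term $\sum_a D_a D_a K_3$ is handled using Claim \ref{pi3a} and the Fermi-transported frame, the Ricci term $R_{3\rho}K^\rho = \ub R_{34}$ (which on $\N^+$ need not vanish, so this must be tracked), and the curvature term $R_{343\rho}K^\rho = \ub R_{3434}$.

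The key simplification I expect is that the nontrivial right-hand side contributions organize themselves into a transport equation of the form $L(D_3 K_3) = (\text{something})$, whose right-hand side I anticipate vanishes after using the full set of relations in \eqref{nonexpanding} (in particular $R_{4a}=0$, $R_{4aba}=0$, $R_{344a}=0$) possibly combined with the Gauss--Codazzi-type structure and Lemma \ref{Rllb}; note $R_{3434}$ and $R_{34}$ are the components genuinely allowed to be nonzero, so the computation must show their contributions cancel against terms coming from the $D_a D_a K_3$ piece and the Christoffel corrections on the left. Granting that, $L(D_3 K_3)=0$ on $\N^+$, so $D_3 K_3$ is determined by its value on $\S$. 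On the bifurcate sphere $K = \ub L - u\Lb$ vanishes to the appropriate order — more precisely $\ub = u = 0$ on $\S$ — so $D_3 K_3 = \Lb(\ub) L_3 - \Lb(u)\Lb_3 + \ub D_3 L_3 - u D_3\Lb_3$ evaluated on $\S$; since $u=\ub=0$ there and $g(L,\Lb)=-1$ forces the relevant metric contractions, the terms $\Lb(\ub)g(L,\Lb)$ type contributions must be checked to cancel, yielding $D_3 K_3 = 0$ on $\S$, hence on all of $\N^+$.

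The main obstacle I foresee is bookkeeping the correction terms on the left-hand side of \eqref{eqKL}: unlike the $\b=4$ and $\b=b$ cases, here $D_3$ hits a $3$-index, so commuting $D_4$ past $D_3$ and re-expressing everything via \eqref{Christoffel} generates several Christoffel-quadratic terms ($\ze_a^2$, $\chib_{ab}$-type) and derivatives of $\ze$, and one must verify these are exactly matched by the tangential Laplacian $\sum_a D_a D_a K_3$ and the allowed curvature components. I would handle this by writing $D_3 K_3 = g(D_3 K, \Lb) + (\text{terms with }D_3\Lb)$, differentiating along $L$, and systematically replacing every curvature component using \eqref{nonexpanding} and Lemma \ref{Rllb}; the identity \eqref{Lzeta} established inside Claim \ref{pi3a} will also be needed to kill derivatives of $\ze_a$ along $L$. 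Once the transport equation closes to $L(D_3 K_3)=0$, the conclusion $\pi_{33} = 2D_3 K_3 = 0$ on $\N^+$ is immediate, and the analogous statement on $\N^-$ follows by the symmetric argument with $3 \leftrightarrow 4$.
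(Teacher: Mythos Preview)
Your overall strategy is right, but there is a genuine gap at the crucial step. When you set $\b=3$ in \eqref{eqKL} and carry out the computation, the right-hand side does \emph{not} vanish algebraically from \eqref{nonexpanding} and Lemma \ref{Rllb}: one obtains precisely
\[
2\,D_4 D_3 K_3 \;=\; tr\chib \;-\; \ub\, L(tr\chib),
\]
and neither $tr\chib$ nor $L(tr\chib)$ is forced to be zero on $\N^+$ by the relations you list (non-expansion only gives $\chi=0$ on $\N^+$, not $\chib=0$ there). Your anticipated cancellation between $R_{34}+R_{3434}=R_{3aa4}$ and the tangential Laplacian $\sum_a D_aD_aK_3$ is only partial; what survives is exactly the expression above.

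The missing ingredient is a \emph{differential} identity that goes beyond \eqref{nonexpanding}: one must establish $L\,L(tr\chib)=0$ on $\N^+$. This requires two inputs you have not invoked. First, from the Einstein--Maxwell structure and $F_{4a}=0$ one gets $\Lb(R_{44})=\Lb(F_{4a}^2)=0$, hence $D_3 R_{4aa4}=0$. Second, the second Bianchi identity $D_4 R_{3aa4}+D_3 R_{a4a4}+D_a R_{43a4}=0$ then forces $L(R_{3aa4})=0$. Differentiating the formula $L(tr\chib)=R_{4a3a}+|\ze|^2-\mathrm{div}\,\ze$ (obtained from Lemma \ref{Rllb}) along $L$ and using \eqref{Lzeta} now yields $LL(tr\chib)=0$. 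With this in hand, either take one more $L$-derivative of the transport equation to get $L(D_4D_3K_3)=0$ and use $D_4D_3K_3|_\S=0$, or equivalently note that $tr\chib$ is then affine in $\ub$ with $tr\chib|_\S=0$ (from non-expansion of $\N^-$), so $tr\chib-\ub\,L(tr\chib)\equiv 0$. After that your endgame is correct.
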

\begin{proof}

Before proving the claim, one needs more support from the
Einstein-Maxwell equations (\ref{E-M}). Since $F_{4a}=o0$, we have
\begin{equation*}
\Lb (R_{44}) = \Lb (F_{4a}^2) = 2 F_{4a} \Lb(F_{4a}) =0
\end{equation*}
which implies
\begin{equation}\label{LbR}
\Lb (R_{4aa4}) = 0
\end{equation}
Recall one of the second Bianchi identities:
\begin{equation}\label{bianchi}
D_4 R_{3aa4} + D_3 R_{a4a4} + D_a R_{43a4}=0
\end{equation}
A simple computation with the help (\ref{nonexpanding}) and
(\ref{LbR}) shows the last two terms in (\ref{bianchi}) are
zeroes. So we have
\begin{equation}\label{LR3aa4}
 L(R_{3aa4}) = D_4 R_{3aa4} =0.
\end{equation}
We compute $L(tr \chib)$ along $\N^+$:
\begin{align*}
 L(tr \chib) &= L(g(D_a \Lb,e_a))=g(D_L D_a \Lb,e_a)+g(D_a \Lb,D_L e_a)\\
         &= R_{L a \Lb a} + g(D_a D_L \Lb, e_a) + |\ze|^2\\
         &= R_{4a3a}+|\ze|^2-g(D_a \ze^\sharp, e_a)
\end{align*}
In view of (\ref{LR3aa4}) and (\ref{Lzeta}), we have
\begin{align*}
 L L(tr \chib) &= -L (g(D_a \ze^\sharp, e_a)) = -g(D_L D_a \ze^\sharp, e_a)\\
           &= - R_{4a \ze^\sharp a}-g(D_a D_L \ze^\sharp, e_a)\\
           &= -g(D_a D_L (\ze_b e_b), e_a)\\
           &= -\ze_b g(D_a (\ze_b L), e_a)
\end{align*}
This shows
\begin{equation}\label{LLtrchib}
 L L(tr \chib) = 0.
\end{equation}
Now we are ready to prove the claim. We set $\b =3$ in (\ref{eqKL}), so
\begin{align*}
 2D_4 D_3 K_3 &= D_a D_a K_3 + R_{3\rho} K^\rho + R_{343\rho} K^\rho\\
          &= D_a D_a K_3 + \ub R_{34} +  \ub R_{3434}\\
          &= D_a D_a K_3 + \ub R_{3aa4}
\end{align*}
By Lemma \ref{Rllb}, we have
\begin{align*}
 R_{3aa4} &= -(D\ze)(e_a,e_a)-(\nabla_L \chib)(e_a,e_a) + \ze_a^2\\
      &= -(D_{e_a}\ze)(e_a) -L(tr \chib) + |\ze|^2\\
      &= - div \ze + \ze(\nabla_{e_a} e_a) -L(tr \chib) + |\ze|^2
\end{align*}
We also can compute
$$D_a D_a K_3 = \ub (div \ze - \ze(\nabla_{e_a} e_a)-|\ze|^2)+ tr\chib$$
The previous computations showed
$$2 D_4 D_3 K_3 = tr \chib -\ub L(tr \chib)$$
So in view of \ref{LLtrchib}
\begin{equation}\label{LD4D3K3}
 L(D_4 D_3 K_3) = -\ub LL(tr\chib) = 0
\end{equation}
Since on $\S$, on check easily that $D_4 D_3 K_3 = 0$, so $D_4 D_3
K_3 = 0$ on $\N^+$, which once again implies $D_3 K_3 = 0$ by
solving transport equations along $L$.
\end{proof}

So we proved $\pi_{\a\b} = 0$ on $\N^+$. One still needs to show
$\LL_K F_{\a\b}=0$.
\begin{claim} \label{DDF}
We have the following identities:
\begin{equation} \label{DF}
 D_a F_{Lb} = D_L F_{Lb} = D_L F_{ab} = D_L F_{L \Lb} = 0
\end{equation}
\end{claim}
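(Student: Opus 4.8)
The plan is to establish the identities in \eqref{DF} purely from the structural equations already derived on $\N^+$, namely \eqref{nonexpanding}, \eqref{Christoffel}, together with the Maxwell equations in \eqref{E-M}. The starting point is the observation that $F_{4a}=0$ everywhere on $\N^+$, so we may freely differentiate this relation along directions tangent to $\N^+$.

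First I would compute $D_L F_{4a}$. Using $D_LL=0$ and $D_Le_a=-\ze_aL$ from \eqref{Christoffel}, the Leibniz rule gives $0=L(F(L,e_a))=F(L,D_Le_a)=-\ze_aF(L,L)=0$, which is vacuous, so this must instead be read as a statement about the covariant derivative of the tensor $F$: $D_LF_{4a}=L(F_{4a})-F(D_LL,e_a)-F(L,D_Le_a)=0-0+\ze_aF_{44}=0$. Hence $D_LF_{Lb}=0$. Similarly, differentiating $F_{4a}=0$ along $e_c$ and using $D_{e_c}L=-\ze_cL$, $D_{e_c}e_a=\nabla_{e_c}e_a+\chib_{ca}L$ with $\chib=0$ not assumed — but note $F(L,L)=0$ kills the offending term again — gives $D_cF_{Lb}=0$ modulo a term $\chib_{ca}F_{Lb}$ which vanishes because we have just shown $F_{Lb}=0$ pointwise on $\N^+$. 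For $D_LF_{ab}$ one uses the Maxwell equation $D^\a F_{\a\b}=0$: contracting and expressing the $3$-$4$ part via $D_4F_{3b}+D_3F_{4b}=\sum_a D_aF_{ab}$ plus lower-order terms, and then feeding in $F_{4a}=0$, $R_{4a}=0$ and the already-established $D_aF_{Lb}=D_LF_{Lb}=0$; the antisymmetrized Bianchi-type identity $D_{[\a}F_{\b\ga]}=0$ with indices $(4,a,b)$ then relates $D_4F_{ab}$ to $D_aF_{4b}-D_bF_{4a}=0$, giving $D_LF_{ab}=0$. Finally $D_LF_{L\Lb}$: from $D_{[\a}F_{\b\ga]}=0$ with indices $(4,3,4)$ one gets $D_4F_{34}+D_3F_{44}+D_4F_{43}=0$ which is trivial, so instead I would use indices $(4,a,3)$ or the divergence equation evaluated along $L$; since $F_{44}=0$ and $F_{4a}=0$ the only surviving component is $F_{34}$, and $D_LF_{34}$ is controlled by the same set of structural vanishings.

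The main obstacle I expect is bookkeeping the connection coefficients correctly when converting $L(F_{\mu\nu})=0$-type statements (which come for free from differentiating the pointwise vanishing $F_{4a}=0$) into statements about the covariant derivatives $D_LF_{\mu\nu}$, since the correction terms involve $\ze$, $\chib$ and $\nabla e_a$ — and one must verify in each case that these corrections multiply a component of $F$ that has already been shown to vanish on $\N^+$, so that the argument does not become circular. The order of the four identities matters: $D_aF_{Lb}=0$ and $D_LF_{Lb}=0$ should be obtained first (they follow most directly from differentiating $F_{4a}=0$ and only require $F_{44}=0$), then $D_LF_{ab}=0$ using the two Maxwell equations together with $R_{4a}=0$, and finally $D_LF_{L\Lb}=0$ reusing all of the above. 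Throughout, the identities in \eqref{nonexpanding} — particularly $R_{4a}=0$, which forces the Ricci term in the Maxwell-Bianchi manipulations to drop — are exactly what makes the computation close.
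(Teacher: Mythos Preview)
Your proposal is correct and follows essentially the same route as the paper: differentiate $F_{4a}=0$ tangentially along $\N^+$ to obtain $D_aF_{Lb}=D_LF_{Lb}=0$, then use $D_{[\a}F_{\b\ga]}=0$ with indices $(4,a,b)$ to get $D_LF_{ab}=0$, and finally the divergence equation $D^\a F_{\a L}=0$ to get $D_LF_{L\Lb}=0$. One simplification worth noting: the paper's proof never invokes $R_{4a}=0$ or any curvature component --- the connection formulae \eqref{Christoffel}, the pointwise vanishing $F_{4a}=0$, and the two Maxwell equations are all that is required, so the computation is shorter and the bookkeeping less delicate than you anticipate.
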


\begin{proof}
We will use (\ref{nonexpanding}) repeatedly:
\begin{align*}
 D_a F_{Lb} &= (D_a F)(L \otimes e_b)\\
        &= e_a( F_{Lb})-F(D_a L \otimes e_b)- F(L \otimes D_a e_b)\\
        &= 0.
\end{align*}
Same argument shows $D_L F_{Lb} =0$. We use Bianchi identity:
\begin{equation*}
D_L F_{ab} = -D_a F_{bL}-D_b F_{L a} =0.
\end{equation*}
We now use the last equality in Einstein-Maxwell equations
(\ref{E-M}):
\begin{equation*}
D^\a F_{\a L}=0 \Rightarrow D_a F_{aL}-D_L F_{\Lb L} =0
\end{equation*}
so $D_L F_{\Lb L} = D_a F_{aL} = 0.$
\end{proof}

\begin{claim}\label{LKFF}
On $N^+$, we have
 \begin{equation}
\LL_K F =0
 \end{equation}
\end{claim}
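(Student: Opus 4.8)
The plan is to compute $\LL_K F$ componentwise in the null frame $\{e_1,e_2,e_3=\Lb,e_4=L\}$ along $\N^+$, exactly as was done for the deformation tensor $\pi$. Recall $K=\ub L$ on $\N^+$ (since $u=0$ there) and that by Lie-differentiating the second and third equations of (\ref{E-M}), the tensor $\LL_K F$ together with $\pi$ satisfies the closed symmetric hyperbolic system whose characteristic data we must control. Since $\pi_{\a\b}=0$ on $\N^+$ has already been established through Claims \ref{pi34}--\ref{pi33}, it remains only to show every component of $\LL_K F$ vanishes on $\N^+$. Writing $(\LL_K F)_{\a\b} = K(F_{\a\b}) - F([K,e_\a],e_\b) - F(e_\a,[K,e_\b])$ — or equivalently using $(\LL_K F)_{\a\b} = K^\rho D_\rho F_{\a\b} + F_{\rho\b}D_\a K^\rho + F_{\a\rho}D_\b K^\rho$ — the idea is to feed in the explicit values of $D_\a K_\b$ on $\N^+$ recorded in (\ref{DK}), Claim \ref{pi34} ($D_3K_4=1$) and Claim \ref{pi3a} ($D_aK_3=\ub\ze_a$, $D_3K_a=-\ub\ze_a$), plus the covariant derivatives of $F$ from Claim \ref{DDF}, to kill the components directly.

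First I would handle the components of $\LL_K F$ that do not involve the bad direction $\Lb=e_3$. For these, $K(F_{\a\b})=\ub L(F_{\a\b})$, and by Claim \ref{DDF} the relevant directional derivatives $D_L F_{ab}$, $D_L F_{Lb}$, $D_a F_{Lb}$ vanish; the connection terms only bring in $D_\a K_\b$ with at most one $\Lb$-index, which by (\ref{DK}) and $F_{4a}=0$ from (\ref{nonexpanding}) also vanish or pair against a zero component of $F$. So $(\LL_K F)_{ab}=(\LL_K F)_{4a}=(\LL_K F)_{44}=0$ on $\N^+$ should fall out mechanically. The genuinely new work is the components with an $e_3$ slot: $(\LL_K F)_{3a}$, $(\LL_K F)_{34}$, and $(\LL_K F)_{33}$. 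Here $K(F_{3\b})=\ub L(F_{3\b})$ still simplifies — e.g. $L(F_{L\Lb})=D_L F_{L\Lb}=0$ by Claim \ref{DDF}, after correcting for $D_L e_3=-\ze_a e_a$ — but one also picks up genuine contributions $F_{\rho\b}D_3 K^\rho$ from the nontrivial $D_3K_4=1$, $D_3K_a=-\ub\ze_a$. The strategy is to show these residual terms cancel among themselves on $\S$ (where $\ub=0$, so $K=0$ and every Lie-derivative component is patently $0$), and then derive a homogeneous transport equation $L\big((\LL_K F)_{3\b}\big)=0$ along the generators of $\N^+$, using $L(\ze_a)=0$ from (\ref{Lzeta}), $F_{4a}=0$, and the Maxwell equation $D^\a F_{\a\b}=0$ exactly as $D_LF_{L\Lb}=D_aF_{aL}=0$ was obtained in Claim \ref{DDF}. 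Propagating from $\S$ then forces vanishing on all of $\N^+$.

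The main obstacle I anticipate is the $(\LL_K F)_{33}$ component — and possibly $(\LL_K F)_{3a}$ — where the combination of the $\ub\ze_a$ terms in $D_3K_a$, the curvature terms that enter when one commutes $L$ past $D_3$, and the Maxwell divergence constraint must be assembled carefully; this is the same kind of bookkeeping that made Claim \ref{pi33} the hardest of the $\pi$-claims, and I expect it to require a second Bianchi identity for $F$ (the identity $D_{[\a}F_{\b\ga]}=0$) together with $\Lb(R_{44})=0$-type consequences of $F_{4a}=0$, to show the relevant $L$-derivative of the $F$-components entering the transport equation also vanishes. Once that second-order vanishing is in hand, the transport argument closes exactly as before, and combining with the already-established $\pi_{\a\b}=0$ on $\N^+$ completes the verification of the characteristic data (\ref{bdry}), hence $\LL_K F=0$ throughout $\O\cap(\I^{++}\cup\I^{--})$ by uniqueness for the hyperbolic system (\ref{pi}).
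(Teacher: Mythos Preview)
Your overall strategy matches the paper's: compute $\LL_K F$ componentwise using the covariant formula, dispatch the components without an $e_3$-slot directly via Claim \ref{DDF} and (\ref{DK}), and for the remaining components derive a homogeneous transport equation along $L$ and propagate from $\S$. Two points need correction, however.

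First, a minor slip: since $F$ is a $2$-form, the diagonal components $(\LL_K F)_{33}$ and $(\LL_K F)_{44}$ vanish identically by antisymmetry, so there is no analogue of the $\pi_{33}$ obstacle here. The only genuinely hard component is $(\LL_K F)_{\Lb b}$, and this is where the paper concentrates its effort: from the two Maxwell equations one first obtains $2D_L F_{\Lb b} = D_a F_{ab} - D_b F_{L\Lb}$, then shows $L(D_L F_{\Lb b})=0$ by commuting $L$ through and using (\ref{nonexpanding}) and Claim \ref{DDF}. This is precisely the ``second-order vanishing'' input you anticipate.

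Second, and more substantively: your assertion that on $\S$ ``$K=0$ and every Lie-derivative component is patently $0$'' is incorrect. The Lie derivative of a tensor at a zero of the vector field depends on the first jet of $K$, not just its value, and here $D_3K_4=1$, $D_4K_3=-1$ are nonzero on $\S$. Concretely, the paper's computation gives
\[
\LL_K F_{\Lb b} \;=\; \ub\, D_L F_{\Lb b} - \ub\,\ze_a F_{ab} - F_{\Lb b} - \ub\,\ze_b F_{\Lb L},
\]
which at $\ub=0$ reduces to $-F_{\Lb b}$, not zero. The missing ingredient is that $F_{3a}=0$ on $\S$: this follows from the non-expanding condition on $\N^-$ (the mirror of $F_{4a}=0$ in (\ref{nonexpanding})), and you must invoke it to obtain the correct initial data for the transport argument. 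Once that is supplied, the rest of your outline goes through exactly as in the paper.
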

\begin{proof}
Recall that
\begin{equation*}
\LL_K F_{\a\b} = D_K F_{\a\b} + g^{\rho\de}D_\a K_\de F_{\rho \b}+
g^{\rho\de}D_\b K_\de F_{\a \rho}.
\end{equation*}

We show each component of $\LL_K F$ vanishes on $\N^+$:
\begin{align*}
 \LL_K F_{ab} &= D_K F_{ab} + g^{\rho\de}D_a K_\de  F_{\rho b}+ g^{\rho\de}D_b K_\de F_{a \rho}\\
          &= \ub D_L F_{ab} = 0.\\
 \LL_K F_{aL} &= D_K F_{aL} + g^{\rho\de}D_a K_\de F_{\rho L}+ g^{\rho\de}D_L K_\de F_{a \rho}\\
              &= \ub D_L F_{aL} = 0.\\
\LL_K F_{L\Lb} &= D_K F_{L\Lb} + g^{\rho\de}D_L K_\de F_{\rho \Lb}+ g^{\rho\de}D_\Lb K_\de F_{L \rho}\\
           &= D_K F_{L\Lb} -D_L K_\Lb F_{L\Lb}-D_\Lb K_L F_{L\Lb}\\
               &= \ub D_L F_{L\Lb} -\pi_{L\Lb}F_{L\Lb}=0
\end{align*}
We need some preparations to show the most difficult term $\LL_K
F_{\Lb a}$ vanishes. From the electromagnetic part of the
Einstein-Maxwell equations (\ref{E-M}), we have
\begin{equation*}
D_L F_{\Lb b} -D_\Lb F_{Lb}+D_b F_{L\Lb}=0
\end{equation*}
\begin{equation*}
-(D_L F_{\Lb b} +D_\Lb F_{Lb})+D_a F_{ab}=0
\end{equation*}
So one derives
\begin{equation}\label{LFLbb}
 2 D_L F_{\Lb b} = D_a F_{ab}-D_b F_{L \Lb}
\end{equation}
Apply $L$ on (\ref{LFLbb}), we have
\begin{align*}
 2 L (D_L F_{\Lb b}) &= L(D_a F_{ab})-L(D_b F_{L \Lb})\\
             &= (D_L D_a F_{ab} + D_a F_{(D_L a) b} + D_a F_{a (D_L b)})\\
             &\quad - (D_L D_b F_{L \Lb}+ D_b F_{(D_L L) \Lb} + D_b F_{L (D_L \Lb)})\\
             &= D_L D_a F_{ab}-D_L D_b F_{L \Lb}\\
             &=(D_a D_L F_{ab}-R_{Laa}{}^\rho F_{\rho b}-R_{Lab}{}^\rho F_{a\rho})\\
             &\quad -(D_b D_L F_{L\Lb}-R_{LbL}{}^\rho F_{\rho \Lb}-R_{La\Lb}{}^\rho F_{L\rho})\\
             &=D_a D_L F_{ab}-D_b D_L F_{L\Lb}\\
             &=[e_a(D_L F_{ab})-D_L F_{(D_a e_a)b}-D_L F_{a(D_a e_b)}]\\
             &\quad - [e_b(D_L F_{L\Lb})-D_L F_{(D_b L)\Lb}-D_L F_{L(D_a \Lb)}]\\
             &=0.
\end{align*}
So we have
\begin{equation}\label{LLFLbb}
L(D_L F_{\Lb b} )= 0
\end{equation}
Now we are ready to show $\LL_K F_{\Lb b}=0.$
\begin{align*}
 \LL_K F_{\Lb b} &= D_K F_{\Lb b} + g^{\rho\de}D_\Lb K_\rho F_{\de b}+ g^{\rho\de}D_b K_\rho F_{\Lb \de}\\
        &= \ub D_L F_{\Lb b} + D_\Lb K_a F_{a b} - D_\Lb K_L F_{\Lb b}-D_b K_\Lb F_{\Lb L}\\
        &= \ub D_L F_{\Lb b} -\ub \ze_a F_{a b} - F_{\Lb b}-\ub \ze_b F_{\Lb L}
\end{align*}
In particular, this shows $\LL_K F_{\Lb b} = 0$ on $\S$.
Notice that $L(F_{Lb})=L(F_{ab})=L(F_{L\Lb})=0$, now apply $L$ on $\LL_K F_{\Lb b}$, so we have
\begin{align*}
 L(\LL_K F_{\Lb b}) &= L(\ub D_L F_{\Lb b}) -L(\ub \ze_a F_{a b}) - L(F_{\Lb b})-L(\ub \ze_b F_{\Lb L})\\
            &\stackrel{(\ref{LLFLbb})}{=} D_L F_{\Lb b} - \ze_a F_{a b}-[D_L F_{\Lb b} + F_{(D_L \Lb)b}+F_{\Lb(D_L b)}]-\ze_b F_{\Lb L}\\
            &=0
\end{align*}
Now solving this ordinary differential equation on $\N^+$ completes the proof.
\end{proof}
\begin{remark}
It follows from the previous computation that $D_\a \pi_{\b\ga} =
0$ on $\N^+$. That's the nature of hyperbolic equations with
initial on a characteristic surface. In fact, $D_a \pi_{\a\b} = 0$
and $D_L \pi_{\a\b} = 0$ trivially comes from the fact that
$\pi_{\a\b} = 0$ on $\N^+$; to see $D_3 \pi_{\a\b} = 0$, we need
to investigate the first equation in (\ref{pi}), utilizing
$\pi_{\a\b} = 0$ and $\LL_K F=0$, it gives
$$D_4 D_3 \pi_{\a\b}+D_3 D_4 \pi_{\a\b}=0.$$
Combined the curvature identity
$$D_4 D_3 \pi_{\a\b}-D_3 D_4 \pi_{\a\b}= -R_{34\a}{}^\rho \pi_{\rho\b}--R_{34\b}{}^\rho \pi_{\a\rho}=0,$$
it gives $L(D_3 \pi_{\a\b})$=0. So $D_3 \pi_{\a\b}=0$ follows from the fact that it vanishes on $\S$.
\end{remark}

The last statement of Proposition \ref{interior}, $[\Lb, K]=-\Lb$
in the domain of dependence, follows from the fact that
\begin{equation}\label{LK}
\left\{ \begin{array}{rl}
    D_\Lb W &= -D_W \Lb \quad \text{where} \quad W = [\Lb, K] + \Lb, \\
        W&=0 \qquad \qquad on \qquad \N^+ \cap \O\\
        \end{array}\right.
\end{equation}
We first prove this ordinary differential equation holds. Since
$K$ is Killing vector field, we know that for arbitrary vector
fields $X$ and $Y$, we have
\begin{equation*}
 \LL_K (D_X Y) = D_X (\LL_K Y) + D_{\LL_K X} Y.
\end{equation*}
Therefore,
\begin{align*}
 D_\Lb W &= D_\Lb (-\LL_K \Lb + \Lb)=-D_\Lb (\LL_K \Lb)=-(\LL_K(D_\Lb \Lb) - D_{\LL_K \Lb} \Lb)\\
    &= D_{\LL_K \Lb} \Lb = -D_{[\Lb, K] +\Lb} \Lb = -D_W \Lb
\end{align*}
It remains to show $W=0$ on $\N^+$.
\begin{align*}
 W &= D_\Lb K -D_K \Lb + \Lb\\
   &= D_\Lb K -\ub D_L \Lb + \Lb
\end{align*}
Since we have already computed the components $D_3 K_\a$, it's
almost trivial to check $W=0$ on $\N^+$. This completes the proof
of Proposition \ref{interior}.

\section{Hawking vector field outside the black hole}\label{outside}

In the previous section we have constructed the Hawking vector
field $K$ inside the black hole region. To be able to extend it
outside the black hole, because the characteristic initial value
problem is ill-posed in this region, as we explained in the
introduction, we need to rely on a completely different strategy.
The idea is, instead of solving a hyperbolic system, we now can
solve $[\Lb, K]=-\Lb$ for $K$. This ordinary differential equation
is well-posed in the complement of the domain of dependence.
That's how $K$ is constructed. Let $\phi_t$ be the one parameter
diffeomorphisms generated by $K$. When $t$ is small, we show that
$(g,F)$ and $(\phi_t^* g, \phi_t^* F)$ they both verify
Einstein-Maxwell equations and they coincide on $\N^+ \cup \N^-$.
We show that the must be coincide in a full neighborhood of $\S$.
In particular, it shows $K$ is Killing. So it's the Hawking vector
field. In the vacuum case, this is due to Alexakis, Ionescu and
Klainerman, see \cite{AIK}.

To realize this strategy, we first define a vector field $K'$ by
setting $K'=\ub L$ on $\N^+ \cap \O$ and solving the ordinary
differential equation $[\Lb,K']=-\Lb$. The vector field $K'$ is
well-defined and smooth in a small neighborhood of $S$ (since
$\Lb\neq 0$ on $S$)  and coincides with $K$ in $\I^{++} \cup
\I^{--}$  in $\O$. Thus $K:=K'$ defines the desired extension.
This proves the following:
\begin{lemma}\label{extendK}
There exists a smooth extension of the vector field $\K$ to a full
neighborhood $\O$ of $S$ such that
\begin{equation}\label{constructK}
[\Lb,K]=-\Lb \qquad \text{ in } \O.
\end{equation}
\end{lemma}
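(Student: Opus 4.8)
The plan is to produce $K$ by integrating the equation $[\Lb,K]=-\Lb$ off the null hypersurface $\N^+$ along the integral curves of $\Lb$, and then to check that the result agrees with the vector field already constructed in Proposition \ref{interior}. Recall from Section \ref{prelim} that $\Lb$ is a smooth null vector field, nowhere vanishing in a neighborhood of $\S$, with $g(L,\Lb)=-1$ on $\N^+$. Since the tangent space to $\N^+$ at a point is the $g$-orthogonal complement of the null generator $L$, and $g(\Lb,L)=-1\neq0$, the field $\Lb$ is transverse to $\N^+$ at every point of $\N^+$ near $\S$. Hence, by the flow-box (tubular neighborhood) theorem applied along $\Lb$ together with compactness of $\S$, after shrinking $\O$ we may assume that every point of $\O$ lies on exactly one integral curve of $\Lb$ which stays in $\O$ and meets $\N^+\cap\O$ in a single point; equivalently, flowing $\N^+$ along $\pm\Lb$ sweeps out a full neighborhood of $\S$.

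Next I would set up the transport equation. From $[\Lb,K]=\LL_\Lb K=D_\Lb K-D_K\Lb=-\Lb$ one gets $D_\Lb K=D_K\Lb-\Lb$; parametrizing an integral curve of $\Lb$ by $s$, this is, in local coordinates, a linear inhomogeneous first order ODE system for the components of $K$, whose coefficients (built from $\Lb$ and the Christoffel symbols) depend smoothly on the metric. I prescribe the initial value $K=\ub L$ on $\N^+\cap\O$, which is exactly the boundary datum of Proposition \ref{interior} since $u\equiv0$ on $\N^+$. By the existence, uniqueness and smooth-dependence theory for linear ODE's, this Cauchy problem has a unique solution, a smooth vector field $K$ defined on all of $\O$ with $[\Lb,K]=-\Lb$ there, so that (\ref{constructK}) holds.

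It then remains to identify this $K$ with the one from Proposition \ref{interior} on the domain of dependence $\O\cap(\I^{++}\cup\I^{--})$, so that the present $K$ is genuinely an extension. On that region the interior vector field satisfies both $[\Lb,K]=-\Lb$ (the last assertion of Proposition \ref{interior}) and $K=\ub L$ on $\N^+$. After shrinking $\O$ so that the integral curve of $\Lb$ through any point of $\O\cap(\I^{++}\cup\I^{--})$ stays in that region until it meets $\N^+$ — which is legitimate because $\Lb(\ub)=0$ while, near $\S$, $\Lb(u)>0$ (it equals $1$ on $\N^-\supset\S$), so along such a curve $\ub$ is constant with a fixed sign and $u$ varies monotonically to its value $0$ on $\N^+$ — the two vector fields solve the same linear ODE along that curve with the same value at its $\N^+$-endpoint, hence coincide by ODE uniqueness. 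Thus the $K$ constructed above extends the interior Hawking field smoothly across $\O$.

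The point that needs genuine care is the neighborhood bookkeeping rather than any hard analysis: one must check that $\Lb$ is transverse to $\N^+$ uniformly near $\S$, that $\O$ can be shrunk so that every point is joined to $\N^+$ by an integral curve of $\Lb$ that never leaves $\O$, and that this $\O$ is compatible with the region on which Proposition \ref{interior} is valid, so that the uniqueness argument identifying the two constructions of $K$ applies on the whole overlap. None of this is deep, but it is precisely what gives meaning to the phrase ``a full neighborhood of $S$''.
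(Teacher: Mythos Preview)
Your proof is correct and follows exactly the approach of the paper: define $K$ by solving the linear transport equation $[\Lb,K]=-\Lb$ along the integral curves of $\Lb$ with initial data $K=\ub L$ on $\N^+$, and invoke Proposition \ref{interior} (specifically $[\Lb,K]=-\Lb$ there) together with ODE uniqueness to show it matches the interior construction. You have simply supplied the details the paper leaves implicit---transversality of $\Lb$ to $\N^+$, the flow-box argument, and the bookkeeping that the $\Lb$-flow from $\I^{++}\cup\I^{--}$ stays in that region until hitting $\N^+$.
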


Let $g_t = \phi_t^* g$ and $\Lb_t = (\phi_{-t})_* \Lb$. In view of
the definition (\ref{constructK}) of $K$, we know
\begin{equation*}
\frac{d}{d t} \Lb_t = - \Lb_t.
\end{equation*}
It implies that
\begin{equation}\label{phiLb}
 \Lb_t = e^{-t} \Lb.
\end{equation}

Let $D^t$ be the Levi-Civita connection of $g_t$, by the tensorial
nature, we know that $D^t_{\Lb_t} \Lb_t = D_\Lb \Lb =0$,
(\ref{phiLb}) infers that $0=D^t_{\Lb_t} \Lb_t = e^{-2t} D^t_\Lb
\Lb$. This proves the following

\begin{lemma}\label{DtLbLb}
Assume $K$ is a smooth vector field constructed in
\eqref{constructK} and $D^t$ the covariant derivative induced by
the metric $\phi_t^* g$. Then,
\begin{equation}\label{dlblb}
 D^t_\Lb \Lb = 0 \qquad \text{in a full neighborhood of } \S.
\end{equation}
\end{lemma}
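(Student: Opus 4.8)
The plan is to exploit the fact that the vector field $K$ of Lemma \ref{extendK} was built, through the ordinary differential equation (\ref{constructK}), precisely so that its flow $\phi_t$ acts on $\Lb$ by nothing more than an overall exponential rescaling, and then to combine this with the naturality of the Levi--Civita connection under $\phi_t$. In particular no partial differential equation or Carleman estimate enters: once the first step is in place, the conclusion is a one--line tensorial computation.

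The first step is to record how $\phi_t$ moves $\Lb$. Writing $\Lb_t=(\phi_{-t})_*\Lb$, the usual identity relating the Lie bracket to the flow of $K$, together with the defining relation (\ref{constructK}), yields the linear ordinary differential equation $\frac{d}{dt}\Lb_t=-\Lb_t$; integrating it gives $\Lb_t=e^{-t}\Lb$, which is (\ref{phiLb}). The feature to keep in mind is that the factor $e^{-t}$ is a genuine constant, not a function on $\M$, so it passes freely through every covariant derivative.

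The second step uses the tensorial character of the connection. By definition $g_t=\phi_t^*g$, so $\phi_t$ is an isometry of $(\M,g_t)$ onto $(\M,g)$; hence $D^t$ is $\phi_t$--related to $D$, and in particular $\phi_t$ carries $D^t_{\Lb_t}\Lb_t$ to $D_\Lb\Lb$. Since $\Lb$ was extended as the gradient of the optical function $\ub$ (Section \ref{prelim}), its integral curves are geodesics, so $D_\Lb\Lb=0$ holds throughout the neighborhood of $\S$ on which $\ub$ is defined, and not merely on $\N^+\cup\N^-$. Therefore $D^t_{\Lb_t}\Lb_t=0$, and substituting $\Lb_t=e^{-t}\Lb$ and using bilinearity of $D^t$ in its arguments we get $0=D^t_{\Lb_t}\Lb_t=e^{-2t}\,D^t_\Lb\Lb$, whence $D^t_\Lb\Lb=0$, which proves the lemma.

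The only points that require a little care, rather than any genuine difficulty, are the sign and convention bookkeeping in the flow identity for $\Lb_t$; the fact that $D_\Lb\Lb=0$ must be invoked on the full neighborhood and not only on the horizon, which is exactly why $\Lb$ was extended as a gradient of the optical function in Section \ref{prelim}; and the observation that the whole statement lives only on the --- possibly further shrunk --- neighborhood of $\S$ on which $\phi_t$ is a diffeomorphism for all small $t$.
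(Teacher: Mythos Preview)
Your proof is correct and follows essentially the same route as the paper's: derive $\Lb_t=e^{-t}\Lb$ from the defining relation $[\Lb,K]=-\Lb$, invoke naturality of the Levi--Civita connection under the isometry $\phi_t:(\M,g_t)\to(\M,g)$ to get $D^t_{\Lb_t}\Lb_t=0$ from $D_\Lb\Lb=0$, and then scale out the constant $e^{-t}$. Your version is in fact slightly more careful than the paper's, since you make explicit why $D_\Lb\Lb=0$ holds on the full neighborhood (namely that $\Lb$ was extended as the gradient of a null optical function) rather than merely on $\N^+\cup\N^-$.
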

To summarize, let $F_t = \phi_t^* F$, we have a family of metrics
and 2 forms $(g_t, F_t)$ which verify the Einstein-Maxwell
equations \eqref{E-M} in the domain of dependence of $\N^+ \cup
\N^-$ and such that $D^t_\Lb \Lb = 0$. So the Theorem \ref{first}
is an immediate consequence of the following uniqueness statement:

\begin{proposition}\label{unique}
Assume in a full neighborhood $\O$ of $\S$, $g'$ is a smooth
Lorentzian metric and $F'$ is a smooth 2 form,  such that
$(g',F')$ solves Einstein-Maxwell equations \eqref{E-M}. If
\begin{equation*}
g'=g \quad \text{ in } (\I^{++}\cup \I^{--}) \cap \O \quad \text{and} \quad D'_{\Lb}\Lb=0 \text{ in } \O,
\end{equation*}
where $D'$ denotes the Levi-Civita connection of the metric $g'$.
Then $g'=g$ and $F'=F$ in a full neighborhood $\O' \subset \O$ of
$\S$.
\end{proposition}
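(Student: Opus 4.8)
The plan is to reduce the statement to a known unique continuation result for Einstein-Maxwell systems across a non-characteristic hypersurface, using the Carleman estimates of Ionescu-Klainerman. First I would set up the geometric framework: near a point $p\in\S$ one has the double null foliation by $u,\ub$ constructed in Section \ref{prelim}, and the region $(\I^{++}\cup\I^{--})\cap\O$ where $g'=g$ is the domain of dependence of $\N^+\cup\N^-$. The crucial structural input is the hypothesis $D'_\Lb\Lb=0$: this says the integral curves of $\Lb$ are geodesics for \emph{both} metrics, and since $\Lb$ is transversal to $\N^+$ and tangent to $\N^-$, it lets me propagate information off the characteristic cone into the ``bad'' regions $\I^{+-}$ and $\I^{-+}$. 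Concretely, I would show that the optical function $\ub$ (defined by $\Lb$-geodesics through $\S_{\ub}\subset\N^+$) is simultaneously a solution of the eikonal equation for $g'$ on the initial cone, so that $g'$ and $g$ share this foliation structure to infinite order along $\N^+\cup\N^-$; combined with $g'=g$ on the domain of dependence this gives agreement of all derivatives of $g',F'$ on $\N^+\cup\N^-$.

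The heart of the argument is then a unique continuation statement: two Einstein-Maxwell solutions $(g,F)$ and $(g',F')$ that agree to infinite order on the bifurcate null hypersurface $\N^+\cup\N^-$ (equivalently, agree in the domain of dependence together with the normalization $D'_\Lb\Lb=0$) must coincide in a full neighborhood of $\S$. Following \cite{AIK} and \cite{IK}, \cite{IK2}, I would introduce the difference tensors, schematically $h = g'-g$ and $k=F'-F$, and derive from the two copies of \eqref{E-M} a closed system of covariant wave/transport inequalities of the form $|\square_g h|\lesssim |h|+|\p h|+|k|+|\p k|$ and similarly for $k$, where the coefficients are controlled because $g'$ is smooth and close to $g$. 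The normalization $D'_\Lb\Lb=0$ is used precisely to control the ``bad'' components (those involving $\Lb$, as in the analysis of $\pi_{3\a}$ in Section \ref{inside}) that are not a priori governed by the wave operator — it supplies the missing transport equations in the $\Lb$ direction. One then applies the Carleman estimates of Ionescu-Klainerman with a weight built from a function that is strictly convex relative to $g$ and vanishes on $\N^+\cup\N^-$ (for instance a suitable combination of $u$ and $\ub$, or the defining function of a pseudo-convex hypersurface foliating a neighborhood of $\S$), to conclude that $h$ and $k$ vanish in a neighborhood of $\S$.

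The order of steps I would carry out: (1) verify that $\ub$ solves the $g'$-eikonal equation and that $\Lb$ is the $g'$-gradient of $-\ub$ along $\N^+$, using $D'_\Lb\Lb=0$ and $g'=g$ in the domain of dependence; (2) deduce that $g'$, $F'$ and all their derivatives agree with $g$, $F$ on $\N^+\cup\N^-$ — here the computations mirror those of Claims \ref{pi34}--\ref{pi33} and \ref{DDF}--\ref{LKFF}, now applied to the difference rather than to $\pi$; (3) write the coupled system of wave and transport equations/inequalities for the difference tensors, isolating the role of the $\Lb$-transport coming from the normalization; (4) identify a pseudo-convex weight function for $\square_g$ adapted to the foliation near $\S$ and invoke the Carleman estimate; (5) run the standard iteration/continuity argument to propagate the vanishing to a full neighborhood $\O'$ of $\S$. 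The main obstacle is step (3)--(4): one must arrange the difference system so that the ``bad derivatives'' (the ones transverse to the null hypersurface) are genuinely absorbed by the $\Lb$-transport supplied by $D'_\Lb\Lb=0$, and then check that the Carleman weight can be chosen pseudo-convex simultaneously for the wave part and compatible with the transport part — this is exactly the technical core borrowed from \cite{AIK}, and the presence of the Maxwell field $F$ (with its own first-order system $D_{[\a}F_{\b\ga]}=0$, $D^\a F_{\a\b}=0$) means the coupling must be handled so that the $F$-equations do not destroy the pseudo-convexity or introduce uncontrolled lower-order terms.
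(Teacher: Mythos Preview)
Your overall architecture --- a coupled wave/transport system for the differences, closed by $\Lb$-transport coming from $D'_\Lb\Lb=0$, followed by the Ionescu--Klainerman Carleman estimate --- matches the paper exactly. The gap is in the choice of variables. Working with $h=g'-g$ directly runs into the quasilinearity of the Einstein equations: the principal part of $\square_{g'}$ depends on $g'$, so subtracting the two systems does not produce an inequality $|\square_g h|\lesssim |h|+|\p h|+\ldots$ without a further reduction, and you have not said what that reduction is. The paper resolves this by moving to the \emph{curvature} level. It first derives semilinear wave equations $\square_g R=R*R+F*D^2F+DF*DF$, $\square_g F=R*F$, $\square_g(DF)=R*DF+DR*F$, and takes the differences $\de R,\de F,\de DF$. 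The residual quasilinear term $(\square_g-\square_{g'})R'$ is then controlled not by $h$ but by $\de\Ga$ and $\p\de\Ga$, the differences of Christoffel symbols in a pair of frames $\{v_\a\},\{v'_\a\}$ obtained by $\Lb$-parallel transport off $\N^+$ using $D$ and $D'$ respectively. This is precisely where $D'_\Lb\Lb=0$ is used: together with $D_\Lb\Lb=0$ it gives $\Ga^\ga_{3\b}=\Ga'^\ga_{3\b}=0$, so that $\Lb(\de\Ga)=\de R+M_\infty(\de\Ga)$ is a genuine transport equation, and the frame components of $g$ and $g'$ coincide automatically (so no ``$h$'' appears at all). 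The closed system is then wave equations for $\de R,\de F,\de DF$ coupled to $\Lb$-ODEs for $\de\Ga,\p\de\Ga,\de v,\p\de v$, exactly in the form required by Proposition~\ref{Carleman}.

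Two smaller points. Your steps (1)--(2) are unnecessary: since $g'=g$ and $F'=F$ on the \emph{open} set $(\I^{++}\cup\I^{--})\cap\O$ and $\N^+\cup\N^-$ is its smooth boundary, agreement to infinite order there is automatic by continuity of derivatives; no computations analogous to Claims~\ref{pi34}--\ref{LKFF} are needed. And your worry about the Maxwell part ``destroying pseudo-convexity'' is handled in the paper by promoting $DF$ to an independent unknown satisfying its own wave equation, so that the Maxwell contribution enters only as zeroth- and first-order terms on the right-hand side, which the Carleman weight absorbs without any additional structural condition.
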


The similar proposition for Einstein vacuum space-times was first
proved in \cite{Al}. A simplified version can be found in
\cite{AIK}. In \cite{IK}, the authors proved uniqueness results
for covariant semi-linear wave equations of a fixed metric. But
for the uniqueness at the level of metrics, since the
corresponding partial differential equations are quasi-linear, one
has to couple the system with a system of ordinary differential
equations to recover the semi-linearity. In this section, we use
this idea to prove uniqueness for the full curvature tensor and
the electromagnetic field. Since the metric is uniquely determined
by the curvature, that will prove Proposition \ref{unique}.

\begin{proof}
We first derive a system of covariant wave equations for the full
curvature tensor $R_{\a\b\ga\de}$ of the metric $g$ and
$F_{\a\b}$. Recall the second Bianchi identities and once
contracted Bianchi identities:
\begin{equation}\label{2bianchi}
D_\a R_{\b\ga\rho\de} +D_\b R_{\ga\a\rho\de}+D_\ga R_{\a\b\rho\de}=0
\end{equation}

\begin{equation}\label{1con}
D^\a R_{\a\de\b\ga}=D_\ga R_{\b\de}-D_\b R_{\ga\de}
\end{equation}

We apply $D^\a$ on \eqref{2bianchi} and commute derivatives, we have
\begin{align*}
 D^\a D_\a R_{\b\ga\rho\de} &= -[D^\a, D_\b] R_{\ga\a\rho\de}-[D^\a, D_\ga] R_{\a\b\rho\de}- D_\b D^\a R_{\ga\a\rho\de} - D_\ga D^\a R_{\a\b\rho\de}\\
&=R^\a{}_{\b\ga\mu}R^\mu{}_{\a\rho\de} + R^\a{}_{\b\a\mu}R_{\ga}{}^\mu{}_{\rho\de} + R^\a{}_{\b\rho\mu}R_{\ga\a}{}^\mu{}_{\de} + R^\a{}_{\b\de\mu}R_{\ga\a\rho}{}^\mu\\
&\quad +  R^\a{}_{\ga\a\mu}R^\mu{}_{\b\rho\de} + R^\a{}_{\ga\b\mu}R_{\a}{}^\mu{}_{\rho\de} + R^\a{}_{\ga\rho\mu}R_{\a\b}{}^\mu{}_{\de} +R^\a{}_{\ga\de\mu}R_{\a\b\rho}{}^\mu\\
&\quad + D_\b D_\de R_{\rho \ga} +  D_\ga D_\rho R_{\de \b} - D_\b D_\rho R_{\de \ga} - D_\ga D_\de R_{\rho \b}
\end{align*}

To simplify the formulae, without losing information, we will $*$
notation. The expression $A*B$ is a linear combination of tensors,
each formed by starting with $A \otimes B$, using the metric to
take any number of contractions. So the algorithm to get $A*B$ is
independent of the choices of tensors $A$ and $B$ of respective
types.

Schematically, we write it as
\begin{equation}\label{boxR}
 \square_g R_{\a\b\ga\de} = (R*R)_{\a\b\ga\de}+ D_\ga D_\de R_{\a\b}
\end{equation}
We need to compute the Hessian of Ricci tensor. By the
gravitational part of \eqref{E-M}, we have the following
schematically expression:
\begin{align*}
D_\ga D_\de R_{\a\b} &=F_{\b \mu} D_\ga D_\de F_\a{}^\mu  + F_{\a}{}^{\mu} D_\ga D_\de F_{\b \mu} + D_\de F_\a{}^\mu D_\ga F_{\b \mu} + D_\ga F_\a{}^\mu D_\de F_{\b \mu}\\
&\quad-\frac{1}{2} g_{\a\b}(F^{\mu \nu}D_\ga D_\de F_{\mu \nu} + D_\ga F_{\mu \nu} D_\de F^{\mu \nu})\\
&=(F*D^2 F)_{\a \b \ga \de}+ (DF*DF)_{\a \b \ga \de}
\end{align*}

Plug this in \eqref{boxR}, we have
\begin{equation}\label{eqR}
 \square_g R_{\a\b\ga\de} = (R*R)_{\a\b\ga\de}+ (F*D^2 F)_{\a \b \ga \de}+ (DF*DF)_{\a \b \ga \de}
\end{equation}

Apparently, this equation involves two derivatives of $F$. In
principle, the electromagnetic part of the Einstein-Maxwell
equations \eqref{E-M} controls only one derivative of $F$ through
the second order system:
\begin{align*}
 D^\a D_\a F_{\b\ga} &= -D^\a D_\b F_{\ga\a} - D^\a D_\ga F_{\a\b}\\
             &= -[D^\a, D_\b ] F_{\ga\a} - [D^\a,D_\ga] F_{\a \b}-D_\b D^\a F_{\ga\a} -D_\ga D^\a F_{\a\b}\\
             &= R^{\a}{}_{\b\ga\mu} F^{\mu}{}_{\a}+ R^{\a}{}_{\b\a\mu} F_{\ga}{}^{\mu}
\end{align*}
Schematically, it's expressed as
\begin{equation}\label{eqF}
\square_g F_{\a\b} = (R * F)_{\a\b}
\end{equation}
Since for the Einstein-Maxwell equations, the electromagnetic part
of is almost decoupled from the gravitational part, we can
actually control second derivative of $F$ by a cost of one
derivative on the curvature tensor $R_{\a\b\ga\de}$. Let's apply
$D_\rho$ on the second equation of \eqref{E-M} and commute
derivatives:
\begin{align*}
 d (D_\de F)_{\a\b} &= D_\a D_\de F_{\b\ga} + D_\b D_\de F_{\ga\a} + D_\ga D_\de F_{\a\b}\\
            &= [D_\a, D_\de] F_{\b\ga} + [D_\b, D_\de] F_{\ga\a} + [D_\ga, D_\de] F_{\a\b} + D_\rho (D_{[\a}F_{\b\ga]})\\
            &= -R_{\a\de\b\mu}F^{\mu}{}_{\ga}-R_{\a\de\ga\mu}F_\b{}^{\mu}-R_{\b\de\ga\mu}F^{\mu}{}_{\a}-R_{\b\de\a\mu}F_\ga{}^{\mu}-R_{\ga\de\a\mu}F^{\mu}{}_{\b}-R_{\ga\de\b\mu}F_\a{}^{\mu}
\end{align*}
where $d$ stands for the exterior derivative on $2$ forms.
Schematically, it gives
\begin{equation*}
 D_{[\a}(DF)_{\b\ga]} = (R*F)_{\a\b\ga}
\end{equation*}
Similarly, we have
\begin{equation*}
 D^\a(DF)_{\a\b} = (R*F)_\b
\end{equation*}
Apply covariant derivative on these last two equations, it implies
\begin{equation}\label{eqDF}
 \square_g (DF)_{\a\b} = (R * DF)_{\a\b} + (DR *F)_{\a\b}
\end{equation}
We summarize \eqref{eqR}, \eqref{eqF} and \eqref{eqDF} in following system of equations
\begin{equation}\label{eqmain}
\left\{ \begin{array}{rl}
    \square_g R_{\a\b\ga\de} &= (R*R)_{\a\b\ga\de}+ (F*D^2 F)_{\a \b \ga \de}+ (DF*DF)_{\a \b \ga \de}\\
        \square_g F_{\a\b} &= (R * F)_{\a\b}\\
    \square_g (DF)_{\a\b} &= (R * DF)_{\a\b} + (DR *F)_{\a\b}
        \end{array}\right.
\end{equation}
We have a similar covariant system of equations for $R'_{\a\b\ga\de}$ and $F'_{\a\b}$.\\

We'll prove Proposition \ref{unique} in a neighborhood  $\O(p)$ of
a point $p \in \S$ where introduce a fixed coordinate system $x_k$
for $k=1,2,3,4$ such that it is fixed for both metrics $g$ and
$g'$. In the proof we shall keep shrinking the neighborhoods of
$p$; to simplify notations we keep denoting such neighborhoods by
$\O(p)$.

We now fix null frame $\{e_1, e_2, e_3=\Lb, e_4=L\}$ on the null
hypersurface $\N^+ \cap \O(p)$. Since $g$ and $g'$ agree to
infinity order on $\N^+$, this null frame is the same for both
metrics. Recall also that the vector field $\Lb$ is also the same
for both metrics. We use two different Levi-Civita connections to
parallel transport the given null frame along $\Lb$:

\begin{equation}\label{defineframe}
\left\{ \begin{array}{rl}
    D_\Lb v_\a = 0 \qquad \text{with} \quad v_\a = e_\a \quad \text{on} \quad \N^+ \cap \O(p)\\
        D'_\Lb v'_\a = 0 \qquad \text{with} \quad v'_\a = e_\a \quad \text{on} \quad \N^+ \cap \O(p)
        \end{array}\right.
\end{equation}

The frames $\{v_\a\}$ and $\{v'_\a\}$ are smoothly defined in
$\O(p)$. We will express all the geometric quantities in these
frames. Let $g_{\a\b} = g(v_\a, v_\b)$, $g'_{\a\b} = g'(v'_\a,
v'_\b)$. Since $D_\Lb v_\a = D'_\Lb v'_\a =0$,  we know
$\Lb(g_{\a\b})=\Lb(g'_{\a\b})=0$, so $g_{\a\b} = g'_{\a\b}$. It
follows that
\begin{equation}\label{eqG}
h_{\a\b}\stackrel{\Delta}{=}g_{\a\b} = g'_{\a\b} \qquad \Lb(h_{\a\b})=0 \quad \text{in  } \O(p).
\end{equation}
Now define the Christoffel symbols, curvature tensors and their
differences,
\begin{equation*}
\Ga^{\ga}_{\a\b} \stackrel{\Delta}{=} g(D_{v_\a}v_\b, v_\ga),\quad
\Ga'^{\ga}_{\a\b} \stackrel{\Delta}{=} g'(D'_{v'_\a}v'_\b,
v'_\ga), \quad \de \Ga^{\ga}_{\a\b}
\stackrel{\Delta}{=}\Ga'^{\ga}_{\a\b}-\Ga^{\ga}_{\a\b}
\end{equation*}
\begin{equation*}
R_{\a\b\ga\de} \stackrel{\Delta}{=} g(R(v_\a,v_\b)v_\ga, v_\de),
R'_{\a\b\ga\de} \stackrel{\Delta}{=} g'(R'(v'_\a,v'_\b)v'_\ga,
v'_\de), \de R_{\a\b\ga\de} \stackrel{\Delta}{=}
R'_{\a\b\ga\de}-R_{\a\b\ga\de}
\end{equation*}
Clearly, we have $\Ga^{\ga}_{3\b} = \Ga'^{\ga}_{3\b}=\de
\Ga^{\ga}_{3\b}=0$. The fact that $D_\Lb v_\a =0$ allow us to
drive a system of ordinary differential equations for $
\Ga^{\ga}_{\a\b}$ and $\Ga'^{\ga}_{\a\b}$:
\begin{align*}
\Lb(\Ga^{\ga}_{\a\b}) &= \Lb(g(D_{v_\a}v_\b, v_\ga))=g(D_{v_3} D_{v_\a}v_\b, v_\ga)+g(D_{v_\a}v_\b, D_{v_3} v_\ga)\\
              &= R_{3\a\b\ga} + g(D_{[v_3,v_\a]} v_\b, v_\ga)+g(D_{v_\a}v_\b, D_{v_3} v_\ga)\\
              &= R_{3\a\b\ga} + \Ga^{\rho}_{3\a}\Ga^{\ga}_{\rho\b}-\Ga^{\rho}_{\a 3}\Ga^{\ga}_{\rho\b} + g_{\rho \de}\Ga^{\de}_{\a\b} \Ga^{\rho}_{3\ga}
\end{align*}
Schematically, we have
\begin{equation}\label{LbGa}
\Lb(\Ga^{\ga}_{\a\b}) = R_{3\a\b\ga} + (\Ga * \Ga)^\ga _{\a\b}
\end{equation}
\begin{equation}\label{LbGa'}
\Lb(\Ga'^{\ga}_{\a\b}) = R'_{3\a\b\ga} + (\Ga' * \Ga')^\ga _{\a\b}
\end{equation}
We take the difference of \eqref{LbGa} and \eqref{LbGa'}, so we have
\begin{align*}
\Lb(\de \Ga^{\ga}_{\a\b}) &=\de R_{3\a\b\ga} + (\Ga' * \Ga'-\Ga * \Ga)^\ga _{\a\b}\\
              &=\de R_{3\a\b\ga} + (\Ga'* \de \Ga)^\ga _{\a\b} + (\Ga* \de \Ga)^\ga _{\a\b}
\end{align*}
Schematically, we have the following expression:
\begin{equation}\label{LbdGa}
\Lb(\de \Ga) = M_\infty(\de \Ga) + M_\infty(\de R).
\end{equation}

\begin{remark}
In general, given $B=(B_1,..., B_L):\O(p) \to \mathbb{R}^L$ we let
$M_\infty(B):\O(p) \to \mathbb{R}^{L'}$ denote vector-valued
functions of the form $M_\infty(B)_{l'}= \sum_{l=1}^L A_{l'}^l
B_l$, where the coefficients $A_{l'}^l$ are smooth on $\O(p)$. So
\eqref{LbdGa} holds because $g$,$g'$ are fixed smooth metrics.
\end{remark}
Now we also need to express the frames $\{v_\a\}$ and  $\{v'_\a\}$
in terms of the fixed coordinate vector fields $\p_k$ relative to
our local coordinates $x_k$. We define
\begin{equation*}
v_\a = v_\a ^k \p_k, \quad v'_\a = {v'}_\a ^k \p_k, \quad (\de
v)_\a^k = {v'}_\a ^k - {v}_\a ^k
\end{equation*}
Consider $[v_3, v_\a] = -D_{v_\a} v_3 = - \Ga^{\b}_{\a 3}v_\b =-\Ga^{\b}_{\a 3}v^k_\b \p_k$, it implies
\begin{equation*}
v_3^j \p_j (v_\a^k)-v_\a ^j \p_j (v_3^k)=-\Ga^{\b}_{\a 3}v^k_\b
\end{equation*}
i.e.
\begin{equation*}
L(v_\a^k)=\p_j (v_3^k) v_\a ^j -\Ga^{\b}_{\a
3}v^k_\b
\end{equation*}

Now a similar relation holds for ${v'}_\a ^k$, we take the
difference, noticing that $\p_j (v_3^k)$ are fixed functions
(since $v_3 = v'_3 =\Lb$), so
\begin{equation}\label{Lbdv}
 \Lb(\de v) = M_\infty(\de \Ga) + M_\infty(\de v).
\end{equation}
We can also apply coordinate derivatives $\p_k$ on \eqref{LbdGa}
and \eqref{Lbdv}, so we have
\begin{equation}\label{LbddGa}
\Lb(\p \de \Ga) = M_\infty(\de \Ga) + M_\infty(\p \de \Ga) + M_\infty(\de R)+ M_\infty(\p \de R).
\end{equation}
\begin{equation}\label{Lbddv}
 \Lb(\p \de v) = M_\infty(\de \Ga) + M_\infty(\p \de \Ga)+ M_\infty(\de v)+M_\infty(\p \de v).
\end{equation}

Finally, we derive a set covariant of wave equations for $\de R$
and $\de F$, $ \de DF$ which are similarly defined for the
difference of the corresponding quantities. In view of
\eqref{eqmain}, the most difficult terms come from the following
differences
\begin{equation*}
(\square_g - \square_{g'})R, \quad (\square_g - \square_{g'})F
\quad \text{and} \quad (\square_g - \square_{g'})DF
\end{equation*}
For the first one, since $g_{\a\b} = g'_{\a\b}$, it's easy to see
it has the following form
\begin{equation*}
(\square_g - \square_{g'})R =  M_\infty(\de \Ga) + M_\infty(\p \de
\Ga).
\end{equation*}
Similar relations hold for the other terms. Together with
\eqref{LbdGa}, \eqref{Lbdv}, \eqref{LbddGa} and \eqref{Lbddv}, we
have the following system of ordinary-partial differential
equations:

\begin{equation}\label{mainequations}
\left\{ \begin{array}{rl}
    \Lb(\de \Ga) &= M_\infty(\de \Ga) + M_\infty(\de R)\\
        \Lb(\p \de \Ga) &= M_\infty(\de \Ga) + M_\infty(\p \de \Ga) + M_\infty(\de R)+ M_\infty(\p \de R)\\
     \Lb(\de v) &= M_\infty(\de \Ga) + M_\infty(\de v)\\
    \Lb(\p \de v) &= M_\infty(\de \Ga) + M_\infty(\p \de \Ga)+ M_\infty(\de v)+M_\infty(\p \de v)\\
    \square_g \de R  &= M_\infty(\de R) + M_\infty(\de F)+ M_\infty(\de DF)+ M_\infty(\p \de D F)\\
             &\quad +M_\infty(\de \Ga) + M_\infty(\p \de \Ga)\\
        \square_g \de F &= M_\infty(\de R)+M_\infty(\de F)+M_\infty(\de \Ga) + M_\infty(\p \de \Ga)\\
    \square_g \de DF &= M_\infty(\de R) + M_\infty(\de DF) + M_\infty(\p \de R)+ M_\infty(\de F)\\
             &\quad +M_\infty(\de \Ga) + M_\infty(\p \de \Ga)
        \end{array}\right.
\end{equation}

Since in $\I^{++} \cup \I^{--}$, $g = g'$ and $F = F'$, so we know
that, on the bifurcate horizon $\N^+ \cup \N^-$, $\de$, $\Ga$, $\p
\de \Ga$, $\de v$, $\p \de v$, $\de R$, $\de F$ and $\de DF$
vanish. We need one more ingredient to conclude that the previous
system of equations has only zero as its solution. It's the
following uniqueness theorem, based on the Carleman estimates
developed in \cite{IK}, due to Alexakis \cite{Al}, see also Lemma
4.4 of \cite{AIK}.

\begin{proposition}\label{Carleman}
Assume  $G_i,H_j:\O(p) \to \mathbb{R}$ are smooth functions,
$i=1,...,I$, $j=1,...,J$. Let $G=(G_1,...,G_I)$,
$H=(H_1,...,H_J)$, $\p G=(\p_1 G_1,\p_2 G_1,\p_3 G_1,\p_4 G_1,
...,\p_4 G_I )$ and assume that in $\O(p)$,
\begin{equation*}
\begin{cases}
&\square_\g G=\mathcal{M}_\infty(G)+\mathcal{M}_\infty(\partial G)+\mathcal{M}_\infty(H);\\
&\Lb(H)=\mathcal{M}_\infty(G)+\mathcal{M}_\infty(\partial G)+\mathcal{M}_\infty(H).
\end{cases}
\end{equation*}
Assume that $G=0$ and $H=0$ on $(\N^+ \cup \N^-) \cap \O(p)$.
Then, there exists a neighborhood $\O'(p) \subset \O(p)$ of $x_0$
such that $G=0$ and $H=0$ in $(\I^{+-} \cup \I^{-+})\cap \O'(p)$.
\end{proposition}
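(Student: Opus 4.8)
The plan is to reduce Proposition~\ref{Carleman} to the Carleman-estimate-based unique continuation theorem of \cite{IK} (see also Lemma~4.4 of \cite{AIK}). The key observation is that the coupled hyperbolic--ODE system for $(G,H)$ falls exactly into the class of semilinear problems treated there: the wave equations for $G$ are posed with respect to a \emph{fixed} smooth Lorentzian metric $g$, and the right-hand sides are linear in $(G,\partial G,H)$ with smooth coefficients, so no quasilinear feedback is present. The hypothesis $D'_\Lb \Lb = 0$ is what makes the transport operator appearing in the $H$-equation a genuine first-order ODE along the smooth vector field $\Lb$, which is transversal to $\N^+ \cup \N^-$ and non-vanishing near $\S$; this is essential because it lets us propagate vanishing of $H$ off the initial hypersurface together with the vanishing of $G$.

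\medskip

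\textbf{Step 1: Set up the Carleman weight and pseudoconvex foliation.} Near a point $p \in \S$, introduce the optical functions $u,\ub$ as coordinate-type functions (after shrinking $\O(p)$), so that $\N^+ = \{u=0\}$, $\N^- = \{\ub = 0\}$, and the ``bad'' region $\I^{+-} \cup \I^{-+}$ is where $u\ub \le 0$. Following \cite{IK}, one constructs a family of hypersurfaces $\{\varphi = \epsilon\}$ that are pseudo-convex for $\square_g$ and whose level $\{\varphi = 0\}$ touches $\N^+ \cup \N^-$ along $\S$, so that a small piece of $\I^{+-}\cup \I^{-+}$ lies in $\{\varphi > 0\}$; here $\varphi$ is a suitable function of $u$ and $\ub$ (schematically $\varphi = u\ub - $ small corrections, or a convex combination adapted to the geometry). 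The Carleman estimate then reads: for $\lambda$ large and $\phi = e^{-\lambda\varphi}$ (or a logarithmic variant as in \cite{IK}), one has $\lambda \| \phi\, w\|_{L^2}^2 \lesssim \lambda^{-1}\|\phi\, \square_g w\|_{L^2}^2 + (\text{boundary terms on }\{\varphi = 0\})$ for $w$ supported in $\O(p)$, and an analogous (easier) weighted estimate for the transport operator $\Lb$.

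\medskip

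\textbf{Step 2: Combine the estimates and absorb.} Apply the wave Carleman estimate to each component of $G$ and the transport Carleman estimate to each component of $H$, then add. On the right-hand side, the terms $\mathcal{M}_\infty(G) + \mathcal{M}_\infty(\partial G) + \mathcal{M}_\infty(H)$ produce, after weighting, contributions bounded by $C\big(\|\phi G\|_{L^2}^2 + \|\phi\, \partial G\|_{L^2}^2 + \|\phi H\|_{L^2}^2\big)$. The first-derivative term $\|\phi\,\partial G\|_{L^2}^2$ is controlled using the standard trick of also estimating $\partial G$ via the commuted equation, or by the built-in gradient term on the left-hand side of the Carleman estimate of \cite{IK} (this is exactly why that reference's estimate includes $\partial G$). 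Choosing $\lambda$ sufficiently large absorbs all these terms into the left side. Since $G = 0$, $\partial G = 0$ and $H = 0$ on $(\N^+\cup\N^-)\cap\O(p)$ — which follows because $g' = g$, $F' = F$ on $\I^{++}\cup\I^{--}$ and hence all differences and their derivatives vanish to infinite order along the characteristic hypersurfaces — the boundary terms on $\{\varphi = 0\}$ vanish. We conclude $\phi G = 0$ and $\phi H = 0$ in $\O(p)$, hence $G = 0$, $H = 0$ where $\varphi > 0$, which contains a neighborhood $\O'(p)$ of $x_0$ inside $\I^{+-}\cup\I^{-+}$.

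\medskip

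\textbf{The main obstacle} is the construction of the pseudo-convex weight $\varphi$ adapted to the bifurcate geometry: one needs a function whose level sets are pseudoconvex for $\square_g$ on \emph{both} sides of $\N^+$ and $\N^-$ simultaneously and which genuinely penetrates the region $u\ub \le 0$, not merely the domain of dependence. This is precisely the technical heart of \cite{IK} and \cite{Al}, and here it is inherited verbatim — the point of the lemma's formulation is that the geometry near $\S$ (two transversal null hypersurfaces meeting in a spacelike $2$-sphere) is identical to the vacuum case, so the weight constructed there applies without change. A secondary subtlety is bookkeeping: verifying that every coefficient appearing in \eqref{mainequations} is indeed of the form $\mathcal{M}_\infty$, i.e.\ smooth and independent of the unknowns, which is guaranteed because $g$ and $g'$ share the fixed coordinate frame and $g_{\a\b} = g'_{\a\b}$ in the parallel-transported frame \eqref{eqG}. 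Granting the weight from \cite{IK}, the proof is then a direct citation plus the absorption argument above.
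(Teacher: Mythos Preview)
The paper does not actually prove Proposition~\ref{Carleman}: it is stated as a known result, attributed to the Carleman estimates of \cite{IK}, the work of Alexakis \cite{Al}, and Lemma~4.4 of \cite{AIK}, and then immediately applied as a black box to finish Proposition~\ref{unique}. So there is no ``paper's own proof'' to compare against.

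Your proposal goes further than the paper does, by sketching the Carleman-estimate argument underlying those references. The outline is essentially correct: one builds a pseudo-convex weight $\varphi$ adapted to the bifurcate null geometry near $\S$, proves a weighted $L^2$ estimate for $\square_g$ with a gradient term on the left, pairs it with the elementary weighted estimate for the transport operator $\Lb$, and absorbs the right-hand side for large $\lambda$. That is precisely the mechanism in \cite{IK}, \cite{IK2}, and \cite{AIK}. One small point of conflation: your remarks invoking $D'_{\Lb}\Lb=0$ and the agreement $g'=g$, $F'=F$ on $\I^{++}\cup\I^{--}$ belong to the \emph{application} (Proposition~\ref{unique}), not to the abstract statement of Proposition~\ref{Carleman}, whose hypotheses are simply that $G$ and $H$ vanish on $(\N^+\cup\N^-)\cap\O(p)$ and satisfy the displayed system with smooth coefficients. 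Keeping the abstract proposition separate from its use would make the logic cleaner, but the substance of your sketch is sound and faithful to the cited sources.
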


Apparently, this proposition finishes the proof of Proposition
\ref{unique}, which implies that the vector field $K$ is Killing
in a full neighborhood of $\S$.
\end{proof}

\begin{remark}
The vector field $K$ is time-like outside the black hole, i.e.
$g(K,K) \leq 0$ in $\I^{+-} \cup \I^{-+}$, which follows directly
from the fact that $\Lb (g(K,K)) \geq 0$.
\end{remark}

\section{Rotational Killing vector field}\label{rotation}

The purpose of this section is to prove Theorem \ref{second}. In
addition to the Hawking vector field $K$ we just constructed, we
assume $(\O,g,F)$ has another Killing vector field $T$ such that
it's tangent to $\N^+ \cup \N^-$, non-vanishing on $\S$ and $\LL_K
F = 0$. We need to find a constant $\la$, such that $Z = T + \la
K$ is a rotational Killing vector fields, i.e. all the orbits are
closed.

One needs to study the action of $T$ on the bifurcate sphere $\S$.
Since $T$ is a smooth vector field tangent the bifurcate horizon
$\N^+ \cup \N^-$, it must be tangent to $\S$. We can conclude that
the existence of such a non-vanishing Killing vector field $T$ on
$\S$ forces the restriction of the metric $g$ on $\S$ to be
rotational symmetric thanks to Lemma \ref{lemma1} in the appendix.
In our case $X = T|_\S$ on $\S$ with induced metric from $g$. It
has a period $t_0$. It has two zeroes and we choose one of them,
denoting it by $p\in \S$. To get a space-time rotational vector
field, we need to study $T$ on the black hold boundary $\N^+ \cup
\N^-$. On $\N^+$, we define $\la(T) = \frac{g(T, \Lb)}{g(K,\Lb)}$
which is essentially the $K$ direction of $T$. We prove the
following lemma

\begin{claim}\label{lambdaT}
On $\N^+$,  $L(T)$ is constant along each null geodesic, i.e. $L(\lambda(T))=0$.
\end{claim}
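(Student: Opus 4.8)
The plan is to work entirely on $\N^+$, in the canonical null frame $\{e_1,e_2,e_3=\Lb,e_4=L\}$, and to convert the Killing equation for $T$ into transport equations along the null generators. First I would record the elementary algebraic facts. Since $K=\ub L-u\Lb$ on $\N^+\cup\N^-$ (Proposition \ref{interior}) and $u\equiv 0$ on $\N^+$, we get $g(K,\Lb)=-\ub$ there. Since $T$ is tangent to $\N^+$ we have $g(T,L)=0$ on $\N^+$, so $T=T^4L+T^ae_a$ on $\N^+$ with $g(T,\Lb)=-T^4$, and hence
\[
\lambda(T)=\frac{g(T,\Lb)}{g(K,\Lb)}=\frac{T^4}{\ub}\qquad\text{on }\N^+ .
\]
Also, since $\S=\N^+\cap\N^-$, the vector $T|_\S$ is tangent to $\S$, so $T^4=0$ on $\S$. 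Thus it suffices to show that along each null generator of $\N^+$ the function $T^4$ equals $\ub$ times a constant.

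Next I would extract the single component of the Killing identity $\pi^T_{\mu\nu}:=D_\mu T_\nu+D_\nu T_\mu=0$ that is actually needed, namely $\pi^T_{4a}=0$. Using \eqref{Christoffel} (in particular $D_Le_a=-\ze_aL$ and $D_{e_a}L=-\ze_aL$) together with $g(T,L)\equiv 0$ on $\N^+$, a short computation gives $D_aT_4=0$ and $D_4T_a=L(T_a)$, so $\pi^T_{4a}=0$ forces
\[
L(T^a)=0\qquad\text{on }\N^+ .
\]
Feeding this into $[T,L]=D_TL-D_LT$ and using \eqref{Christoffel} again (especially $D_LL=0$), the $e_a$-components cancel and one is left with $[T,L]=-L(T^4)\,L$; that is, on $\N^+$ the bracket $[T,L]$ is everywhere proportional to $L$.

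The last step is a second-order transport argument. Because $T$ is Killing, $\LL_T(D_XY)=D_{\LL_TX}Y+D_X(\LL_TY)$ for all vector fields $X,Y$; take $X=Y=L$. Since $D_LL=0$ on $\N^+$ and $T$ is tangent to $\N^+$, the left side vanishes on $\N^+$, and substituting $[T,L]=-L(T^4)L$ together with $D_LL=0$ the right side collapses to $-L\big(L(T^4)\big)L$. Hence $L(L(T^4))=0$ on $\N^+$, i.e. $L(T^4)$ is constant along every null generator. Since $L(\ub)=1$ and $T^4=0$ on $\S$, integrating along each generator gives $T^4=\ub\,L(T^4)$, so $\lambda(T)=T^4/\ub=L(T^4)$ is constant along each generator, which is exactly $L(\lambda(T))=0$. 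The corresponding statement on $\N^-$ follows by interchanging the roles of $(L,\ub,\N^+)$ and $(\Lb,u,\N^-)$.

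The step I expect to be the real obstacle, and the reason the claim is not immediate, is recognizing that no information transverse to $\N^+$ is needed. One is tempted to invoke $\pi^T_{34}=0$, but that identity brings in $\Lb$-derivatives such as $\Lb\big(g(T,L)\big)$, a genuinely transverse quantity that cannot be controlled from data on $\N^+$ alone; the content of the argument is that $\pi^T_{4a}=0$ plus the Killing--connection identity already force $[T,L]\parallel L$ and the affine dependence of $T^4$ on $\ub$, so the transverse components of the Killing equation never enter.
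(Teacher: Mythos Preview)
Your argument is correct and follows essentially the same route as the paper: first establish $[T,L]\parallel L$ on $\N^+$ using only the Killing property of $T$ and $\chi=0$, then feed $[T,L]=fL$ into the Killing--connection identity $\LL_T(D_LL)=D_{\LL_TL}L+D_L(\LL_TL)$ with $D_LL=0$ to get $L(f)=0$, and finally identify $f$ with $\lambda(T)$. The only cosmetic difference is that you extract $[T,L]=-L(T^4)L$ by writing $T=T^4L+T^ae_a$, invoking $\pi^T_{4a}=0$ to kill $L(T^a)$, and using the Fermi relation $[L,e_a]=0$, whereas the paper obtains $g([T,L],e_a)=\chi(T,e_a)-\chi(e_a,T)=0$ directly from metric compatibility and the symmetry (in fact vanishing) of $\chi$; and you read off $\lambda(T)=T^4/\ub$ in components while the paper passes through $T(\ub)=-f\ub$ via $[T,L](\ub)$.
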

\begin{proof}
We first show that $[T, L]$ is parallel to $L$, i.e. there is a
function $f: \N^+ \to \mathbb{R}$, such that
\begin{equation*}
[T,L] = fL.
\end{equation*}
Since both vectors are tangent to $\N^+$, so is $[T,L]$. It
suffices to show $g([T,L], e_a)=0$.
\begin{align*}
 g([T,L], e_a) &=g(D_T L, e_a) - g(D_L T, e_a) \stackrel{Killing}{=}g(D_T L, e_a) + g(D_a T, L)\\
           &=g(D_T L, e_a) - g(T, D_a L) =\chi(T, e_a)-\chi(e_a, T)=0
\end{align*}

We then show that $L(f)=0$. Since $D_L L=0$ and $T$ is Killing, we have
\begin{align*}
 0 &= \LL_T(D_L L) = D_{\LL_T L} L+ D_L (\LL_T L)\\
   &= D_{fL} L + D_L(fL) =L(f)L
\end{align*}
It implies that $f$ is determined on $\S$. We can assume $f: \S
\to \mathbb{R}$.
\begin{align*}
 f&=fL(\ub)=[T,L](\ub)=-L(T(\ub))
\end{align*}
So
\begin{equation*}
T(\ub)= -f \ub.
\end{equation*}
Now we compute $L(\la(T))$ by recalling that $\Lb$ is the gradient
of $\ub$ under the metric $g$:
\begin{align*}
 L(\la(T)) &= L(\frac{g(T, \Lb)}{g(K,\Lb)})=-L(\frac{T(\ub)}{\ub})\\
       &= L(f)=0
\end{align*}
\end{proof}

Now we can find the rotational vector field $Z$:
\begin{claim}
Let $\lambda = f(p)$, then $Z =T-\lambda K$ is a rotational vector
field with period $t_0$.
\end{claim}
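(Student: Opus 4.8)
The plan is to show that $Z = T - \lambda K$ generates a periodic flow by tracking the action of $T$ (hence of $Z$) along the null generators of $\N^+$ and then on the whole horizon, and finally by invoking the uniqueness/rigidity machinery already established for $K$ to propagate periodicity into a full neighborhood of $\S$. First I would recall from Lemma \ref{lemma1} (the appendix lemma) that $X = T|_\S$ is a rotational Killing field on $(\S, g|_\S)$ with period $t_0$ and with two zeroes, one of which is the chosen point $p$; thus the flow $\psi_{t_0}$ of $T$ restricted to $\S$ is the identity. The point of subtracting $\lambda K$ is that $K$ vanishes on $\S$ (since $K = \ub L - u\Lb$ with $\ub = u = 0$ on $\S$), so $Z|_\S = T|_\S$ still has period $t_0$ on $\S$, and $p$ is a fixed point of the flow of $Z$.

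Next I would move off $\S$ along $\N^+$. The candidate $\lambda = f(p)$ is exactly designed so that $Z = T - \lambda K$ is tangent to the null generators of $\N^+$ \emph{and} has no residual dilation along them: from Claim \ref{lambdaT} we have $\la(T) = f$ constant along each generator, and $f$ is determined by its value on $\S$; combined with $[\Lb,K] = -\Lb$ (Proposition \ref{interior}) and the transport relations \eqref{Christoffel}, one computes $[Z, L] = (f - \lambda)L$ on $\N^+$, which vanishes on the generator through $p$ and, more importantly, the $\ub$-component of $Z$ is $(\la(T) - \lambda)\,\ub\, L = (f - \lambda)\ub L$, so the flow of $Z$ preserves each null generator setwise and acts on the affine parameter $\ub$ by an affine map that is the identity precisely because $Z(\ub) = -(f-\lambda)\ub + $ (correction); I would verify that with $\lambda = f(p)$ the scaling factor integrates to $1$ after time $t_0$. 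Concretely: the flow of $T$ for time $t_0$ maps each generator to another generator (permuting the base points on $\S$ by $\psi_{t_0} = \mathrm{id}$), hence fixes each generator, and rescales $\ub$ by $e^{-t_0 f}$ on that generator; subtracting $\lambda K$ adjusts this to $e^{-t_0(f - \lambda)}$, and I must check $f$ is in fact \emph{constant} on the zero set component, or else argue that the relevant rescaling is controlled by the value $f(p)$ at the single chosen fixed point because the generators limiting onto $p$ force $f \to f(p)$ there and the flow is continuous. This is the delicate bookkeeping step. The symmetric argument on $\N^-$, using $\la(T) = g(T,L)/g(K,L)$ and the analogue of Claim \ref{lambdaT}, shows the flow of $Z$ also has period $t_0$ on $\N^-$; hence $\phi^Z_{t_0} = \mathrm{id}$ on all of $\N^+ \cup \N^-$.

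Finally I would promote periodicity from the horizon to a neighborhood. Since $T$ and $K$ are Killing and $[Z,T]$ will be shown to vanish (it reduces to $[T,K]=0$, which itself follows from the uniqueness argument: $\phi^K_t$ commutes with the isometry flow of $T$ because $T$ preserves all the defining data of $K$ — the horizon, $\Lb$, and the equation $[\Lb,K]=-\Lb$), the vector field $Z = T - \lambda K$ is itself Killing. Its flow $\phi^Z_{t_0}$ is therefore an isometry of $(\O', g)$ that restricts to the identity on $\N^+ \cup \N^-$ and whose differential at $p$ is the identity (because both $T$ and $K$, hence $Z$, have the same linearization at $p$ as their restrictions to $\S$ together with the action on the normal null directions, which is trivial at $t_0$ by the previous paragraph). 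An isometry of a connected Lorentzian manifold fixing a point and with identity differential there is the identity; alternatively, an isometry equal to the identity on the non-characteristic-enough hypersurface data extends uniquely, so $\phi^Z_{t_0} = \mathrm{id}$ on $\O'$. Hence every orbit of $Z$ is closed with period (dividing) $t_0$, and on $\S$ the period is exactly $t_0$, so $Z$ is a rotational Killing field with period $t_0$. The main obstacle is the middle step: pinning down that the single constant $\lambda = f(p)$ simultaneously kills the affine rescaling along \emph{every} null generator of $\N^+ \cup \N^-$ after time $t_0$ — this is where one uses that $t_0$ is the common period coming from $\S$ and that $\psi^T_{t_0}$ permutes generators trivially, so the rescaling factor is forced to be globally $e^{-t_0(f-\lambda)}$ with $f$ taking, along each orbit-closure, the value it has at the fixed point $p$.
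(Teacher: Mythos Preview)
Your endgame is right --- an isometry fixing a point with identity differential there must be the identity locally (the paper's Lemma \ref{lemma2}) --- but your route to establishing $(\psi_{t_0})_{*p} = \mathrm{id}$ has a genuine gap. You try to prove the stronger statement that $\phi^Z_{t_0} = \mathrm{id}$ on \emph{all} of $\N^+ \cup \N^-$, and you correctly identify the obstacle: this would require the rescaling factor $e^{-t_0(f-\lambda)}$ to equal $1$ on every null generator, i.e.\ $f \equiv f(p)$ on $\S$. You never establish this, and it is not claimed anywhere in the paper; your suggestion that ``$f$ takes, along each orbit-closure, the value it has at the fixed point $p$'' is not justified, since the generic $T|_\S$-orbit is a circle disjoint from $p$.

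The paper avoids this entirely by working with a \emph{single} null generator. Let $\gamma \subset \N^+$ be the null geodesic issuing from $p$ in the direction $L$. Since $p$ is a fixed point of every $\psi_t$ and $\psi_t$ is an isometry, $\psi_t$ carries $\gamma$ into itself; hence $Z|_\gamma$ is everywhere tangent to $\gamma$, i.e.\ proportional to $K|_\gamma = \ub L$. By Claim \ref{lambdaT} the proportionality factor is constant along $\gamma$ and equals its value at $p$, namely $f(p)$; the choice $\lambda = f(p)$ therefore gives $Z|_\gamma \equiv 0$, so $\psi_t$ fixes $\gamma$ \emph{pointwise} for every $t$. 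This immediately yields $(\psi_{t_0})_{*p} L = L$. Combined with $(\psi_{t_0})_{*p} e_a = e_a$ (since $\psi_{t_0}|_\S = \mathrm{id}$) and the isometry property, you get $(\psi_{t_0})_{*p} \Lb = \Lb$, hence $(\psi_{t_0})_{*p} = \mathrm{id}$, and Lemma \ref{lemma2} plus a compactness argument on $\S$ finishes. The moral: you only need to kill the $K$-component of $T$ along the one generator through the fixed point $p$, not along all generators simultaneously.
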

\begin{proof}

Since $K=0$ on $\S$, $Z|_\S = T|_\S$ has the same period $t_0$. We
denote $\psi_t$ the one parameter isometry group generated by $Z$
on space-time. We are going to prove that $\psi_{t_0} = id$ which
concludes the proof of the claim.

We study the action of $\psi_t$ on the null geodesic $\ga$
starting at $p$ and pointing at the $L$ direction. For each $t$,
since $p$ is a fixed point of $\psi_t$ and $\psi_t$ is an
isometry, we know that $\psi_t(\ga) \subset \ga$ is an
reparametrizition of $\ga$ with a possible stretch. In particular,
it implies $Z|_\ga$ is proportional to $K|_\ga$. In view of the
definition of $\lambda$, we know that $Z|\ga = 0$ since we have
subtracted the corresponding portion of $K$ from $T$. So
$\psi_t|_\ga = id$. In particular, $\psi_{t_0}|_{\ga} =id$.

Now we look at the action of $\psi_{t_0}$ on the full tangent
space of $p$. The previous argument shows $(\psi_{t_0})_{*}L = L$.
Since it fixes the whole space slice $\S$, then $(\psi_{t_0})_*
e_a = e_a$. Now using the fact that $\psi_{t_0}$ is an isometry,
we know $\Lb$ is also fixed. So $(\psi_{t_0})_*$ is the identity
map on the tangent space of $p$, now we can use Lemma \ref{lemma2}
in the appendix to conclude that $\psi_{t_0}$ is identity in a
small neighborhood of $p$. Now on can use the compactness of $\S$
and the standard open-closed argument on $\S$ to conclude
$\psi_{t_0}$ is identity map in a small neighborhood of $\S$.
\end{proof}

We need one more claim to finish the proof of Theorem \ref{second}:

\begin{claim}
 $Z$ is the vector field we constructed, then $[Z, K]=0$.
\end{claim}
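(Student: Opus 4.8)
The plan is to show that $[Z,K]$ is itself a Killing vector field that vanishes identically, by first establishing that it vanishes on the bifurcate sphere $\S$ together with enough of its derivatives there, and then propagating this vanishing off $\S$ via the transport equation governing $[Z,K]$ along $\Lb$. Since $Z = T - \la K$ and $T$, $K$ are both Killing, $[Z,K] = [T,K] - \la[K,K] = [T,K]$, so it suffices to prove $[T,K]=0$. The commutator of two Killing fields is again Killing, so $[T,K]$ satisfies the covariant wave equation $\square_g [T,K]_\a = -R_\a{}^\b [T,K]_\b$; moreover, as in Proposition \ref{interior}, a Killing field is determined in a neighborhood of a point by its value and the value of its covariant derivative (equivalently, its $1$-jet) at that point, because it satisfies a first-order closed ODE system along geodesics.

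First I would analyze $[T,K]$ on the bifurcate sphere $\S$. Recall $K=0$ on $\S$ and $K = \ub L$ on $\N^+$, so along $\N^+$ one has $[T,K] = \LL_T(\ub L) = T(\ub)L + \ub[T,L]$. By Claim \ref{lambdaT}'s proof, $[T,L]=fL$ and $T(\ub) = -f\ub$ on $\N^+$, hence $[T,K] = -f\ub L + \ub f L = 0$ \emph{on all of $\N^+$}, not merely on $\S$. Symmetrically, using $K = -u\Lb$ on $\N^-$ and the analogous computation with the $\Lb$-generators, $[T,K]=0$ on $\N^-$. So $[T,K]$ vanishes on the entire bifurcate horizon $\N^+\cup\N^-$.

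Next I would use this to conclude $[T,K]\equiv 0$ in a full neighborhood of $\S$. In the domain of dependence $\I^{++}\cup\I^{--}$ this is immediate: $[T,K]$ solves the homogeneous characteristic initial value problem \eqref{eqK} (with zero data on $\N^+\cup\N^-$), which is well-posed there by \cite{Rendall}, so $[T,K]=0$ on $\I^{++}\cup\I^{--}$. For the exterior regions $\I^{+-}\cup\I^{-+}$ I cannot invoke the wave equation; instead I would derive a transport equation for $W := [T,K]$ along $\Lb$, exactly in the spirit of \eqref{LK}. Using that $T$ is Killing and that $K$ satisfies $[\Lb,K]=-\Lb$ in $\O$ (Lemma \ref{extendK}), one computes $\LL_T[\Lb,K] = [\LL_T\Lb,K] + [\Lb,\LL_T K]$; since $T$ is tangent to $\N^+\cup\N^-$ one controls $\LL_T\Lb = [T,\Lb]$ on the horizon, and the Killing property converts the $\Lb$-derivative of $W$ into an algebraic (zeroth-order) expression in $W$, giving a linear ODE $D_\Lb W = M_\infty(W)$ with $W=0$ on $\N^+$. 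Uniqueness for this ODE forces $W=0$ in a full neighborhood of $\S$.

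The main obstacle is the exterior step: making the transport equation for $W=[T,K]$ along $\Lb$ genuinely \emph{closed} (zeroth order in $W$) rather than merely first order, and checking that the lower-order coefficients $M_\infty$ are smooth up to and across the horizon. This requires knowing how $T$ interacts with $\Lb$ throughout $\O$ — in particular that $[T,\Lb]$ is itself controlled by $T$'s being Killing and tangent to the horizon, together with the already-established relation $[\Lb,K]=-\Lb$ — so that the $\LL_T$ of that relation closes up. Once the ODE is seen to be of the form $D_\Lb W = -D_W\Lb$ (or more generally $D_\Lb W = M_\infty(W)$) with trivial data on $\N^+$, the conclusion $[Z,K]=[T,K]=0$ follows, and hence $\LL_Z F = \LL_T F - \la \LL_K F = 0$ whenever $\LL_T F=0$, completing the proof of Theorem \ref{second}.
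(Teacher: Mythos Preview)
Your computation on the horizon and in the domain of dependence matches the paper exactly: reduce to $W:=[T,K]$, note $W$ is Killing and hence solves \eqref{equationK}, compute $W=[T,\ub L]=\ub fL+T(\ub)L=\ub fL-\ub fL=0$ on $\N^+$ (symmetrically on $\N^-$), and conclude $W=0$ in $\I^{++}\cup\I^{--}$ by well-posedness of the characteristic problem.

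The exterior step is where you diverge from the paper, and where your argument has a gap. The paper does not look for a transport equation at all; it simply invokes Proposition~\ref{Carleman} with $G=W$ and $H=0$: since $\square_g W=M_\infty(W)$ and $W$ vanishes on $\N^+\cup\N^-$, Carleman uniqueness gives $W=0$ in $\I^{+-}\cup\I^{-+}$ in one line. Your Jacobi-identity route, by contrast, does not close: applying $\LL_T$ to $[\Lb,K]=-\Lb$ yields $[\Lb,W]=-[T,\Lb]-[[T,\Lb],K]$, and $[T,\Lb]$ is not controlled by $W$ off the horizon. The obstacle you flag is real, and the hoped-for relation $D_\Lb W=M_\infty(W)$ is not available.

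That said, an ODE argument \emph{does} work, via the Killing 1-jet fact you mentioned early and then set aside. By Theorem~\ref{first} both $T$ and $K$ are Killing throughout $\O$, hence so is $W$, and the pair $(W_\a,\,D_\b W_\ga)$ satisfies a closed linear first-order system along the $\Lb$-geodesics (from the standard identity $D_\a D_\b W_\ga = R\ast W$). On $\N^+$ you have $W=0$; since $L,e_1,e_2$ are tangent to $\N^+$ this gives $D_4 W_\b=D_a W_\b=0$, and then antisymmetry of $DW$ forces $D_3 W_a=-D_a W_3=0$, $D_3 W_4=-D_4 W_3=0$, $D_3 W_3=0$. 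Thus the ODE for $(W,DW)$ has zero data on $\N^+$ and $W\equiv0$ follows. This is more elementary than the paper's Carleman appeal, but you must transport the full 1-jet rather than seek a zeroth-order equation in $W$ alone.
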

\begin{proof}
It suffices to show $[T,K]=0$. Since both $K$ and $T$ are Killing,
in view of the fact that all the Killing vector fields on a
manifold form a Lie algebra under $[-,-]$, we know that $W =[T,K]$
is Killing, so it solves the following equation:
\begin{equation}\label{equationW}
 \square_g W_\a = -W_\a{}^\b W_\b
\end{equation}
Once again, due to the well-posedness of the characteristic
initial-value problem, $W=0$ in the domain of dependence follows
from the fact that
\begin{equation*}
W=0 \qquad \text{on} \quad \N^+ \cup \N^-.
\end{equation*}
It is immediate from the calculations in the proof of Claim
\ref{lambdaT}:
\begin{align*}
 W&=[T,K]=[T,\ub L]=\ub[T,L]+T(\ub)L\\
  &=\ub f L - \ub f L =0
\end{align*}
For ill-posed region $\I^{+-} \cup \I^{-+}$, once again the
vanishing of $W$ follows easily from setting $H=0$ in Proposition
\ref{Carleman}.
\end{proof}

\appendix

\section{Two lemmas on geometry}

\begin{lemma}\label{lemma1}

Assume $h$ is a Riemannian metric on the topological sphere $\S^2$
which admits a non-trivial Killing vector field $X$, then
$(\S^2,h)$ is a Riemannian wrapped product $([0,1],dr^2)
\times_{\phi(r)} (\S^1, d\sigma^2)$. In particular, each orbit of
$X$ is closed and has a common period $t_0$.
\end{lemma}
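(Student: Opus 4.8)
The plan is to classify closed Riemannian surfaces $(\S^2,h)$ carrying a nontrivial Killing field $X$. First I would observe that the zero set of $X$ is discrete (a Killing field vanishing to first order at a point is determined by its value and covariant derivative there, and if both vanish $X\equiv 0$); by the Poincar\'e--Hopf theorem, since $\chi(\S^2)=2$, the sum of indices of the zeros is $2$, and each zero of a Killing field on a surface is a rotational center with index $+1$, so there are exactly two zeros, say $N$ and $S$. The flow $\phi_t$ of $X$ is a one-parameter group of isometries fixing $N$ and $S$. Near $N$, $d\phi_t$ acts on $T_N\S^2$ as a one-parameter subgroup of $O(2)$, hence as rotations; if this subgroup were dense (irrational), the closure of a generic orbit would be a whole annulus of fixed points, contradicting discreteness of zeros, so the rotation is periodic: there is a smallest $t_0>0$ with $d\phi_{t_0}=\mathrm{id}$ on $T_N\S^2$, whence $\phi_{t_0}=\mathrm{id}$ on a neighborhood of $N$ by the same rigidity used in Lemma \ref{lemma2}, and then $\phi_{t_0}=\mathrm{id}$ globally by an open--closed/analytic-continuation argument on the connected manifold $\S^2$. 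Thus every orbit is closed with common period dividing $t_0$.

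Next I would set up the warped-product structure. Define $r$ to be (a normalization of) the $h$-distance from $N$; the level sets $\{r=\mathrm{const}\}$ for $r$ strictly between its extreme values are the non-degenerate orbits of $X$, each a circle, since $X$ is tangent to the distance spheres (the flow preserves $r$) and acts transitively on each (a non-degenerate orbit of a periodic flow on a circle's worth of points). Choosing $r$ to run over $[0,1]$ after rescaling, geodesics emanating from $N$ are orthogonal to these circles (Gauss lemma), so in the coordinates $(r,\sigma)$ where $\sigma$ is the flow parameter of $X$ along a fixed meridian, the metric takes the form $h = dr^2 + \phi(r)^2\, d\sigma^2$ with $\phi>0$ on $(0,1)$ and $\phi\to 0$ at the endpoints; this is exactly the warped product $([0,1],dr^2)\times_{\phi(r)}(\S^1,d\sigma^2)$.

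The main obstacle I anticipate is the smoothness/closing-up analysis at the two poles $N$ and $S$: one must check that $r$ is a genuine smooth coordinate away from the poles, that the warping function $\phi$ extends smoothly with the correct boundary behaviour ($\phi(0)=\phi(1)=0$, $\phi'(0)=1$, $\phi'(1)=-1$ in suitable units) so that no conical singularity appears, and that the two critical points of $r$ are exactly the two zeros of $X$ — equivalently, that $r$ has no other critical points, which follows because $\nabla r$ is nonzero wherever $X$ is and $X\ne 0$ off the poles. The periodicity step — ruling out an irrational rotation at a pole — is the conceptual crux and is precisely where compactness of $\S^2$ is used: on a noncompact surface a Killing field need not have closed orbits, but here the discreteness of the zero set forces rationality. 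Everything else (Gauss lemma, the index count, the local rigidity of isometries fixing a point together with its differential) is standard and can be invoked directly, the latter via Lemma \ref{lemma2} in this appendix.
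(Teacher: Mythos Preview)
Your proposal is correct and follows essentially the same route as the paper's proof: discreteness of the zero set, Poincar\'e--Hopf to locate (at least) two zeros, periodicity of the linearized flow $d\phi_t$ at a zero, and then polar/normal coordinates about a pole to exhibit the warped-product form. One remark: the step you flag as ``the conceptual crux''---ruling out an irrational rotation at $N$---is in fact automatic and needs no argument, since every nontrivial continuous one-parameter subgroup of $SO(2)$ is of the form $t\mapsto R(\omega t)$ with $\omega\neq 0$ and is therefore periodic with period $2\pi/|\omega|$; the irrational/dense scenario you describe simply does not arise in $SO(2)$.
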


\begin{proof}

First, we observe that, if $X$ is non-trivial, then the set
$Z(X)$, which consists all zeroes of $X$, is discrete. It follows
from the fact that, the zero locus of a Killing vector field is a
disjoint union of totally geodesic sub-manifolds each of even
dimension. Since we are on a surface, the zeroes must be discrete.
In particular, since the $\S^2$ is compact, $X$ has only finite
many zeroes.

The second observations is that, for each zero $p$ of $X$,
$ind_X(p)$ the index of $X$ at $p$ is either $1$ or $-1$. It
following from the fact that, $X$ induces an isometry on $T_p
\S^2$, which is a 2-dimensional rotation. So its index must be $1$
or $-1$.

Now we can apply the Poincar\'e-Hopf index Theorem:
\begin{equation*}
\sum_{p\in Z(X)} ind_X(p) = \chi(\S^2) = 2.
\end{equation*}
The previous observation imply that the cardinal number $|Z(X)|
\geq 2$. We can pick up two points $p,q \in Z(X)$. Now let us fix
a minimal geodesic $\ga(t)$ between $p$ and $q$. Let $\phi_t$ be
the flow generated by $X$. Since on $T_p M$, $(\phi_t)_*$ is a
rotation, it has a period $t_0$. Let $x\neq p, q$ be a point on
$\ga$. We show that the orbit of $x$ under $\phi_t$ is a closed
non-degenerate circle, more precisely, it is exactly the image
$\{\phi_t (x) | t\in[0, t_0)\}$. It trivially holds when $x$ is
close to either $p$ or $q$, i.e. in the normal coordinate of $p$
or $q$, since it will stay on the geodesic sphere which is a
circle around either $p$ or $q$. Since $\ga$ is minimal and
$X(q)=0$, so $\phi_t (\ga)$ is also a minimal geodesic between $p$
and $q$. When $t$ varies, $\phi_t (\ga)$ sweeps the whole $\S^2$,
we know that all points except $q$ is in the normal coordinate of
$p$, so the orbit $x$ is closed. Apparently, this finishes the
proof of the lemma.
\end{proof}

\begin{lemma}\label{lemma2}
Assume $(M, g)$ is a Lorentzian manifold, $\phi: M \to M$ is an
isometry and $p\in M$ is one fixed point of $\phi$. If $\phi_{*p}
=id$, the $\phi = id$ locally around $p$.
\end{lemma}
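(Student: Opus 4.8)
The plan is to use the exponential map of $g$ at $p$ to transfer the problem to a linear statement on the tangent space $T_pM$. Since $\phi$ is an isometry fixing $p$, it commutes with the Riemannian (Lorentzian) exponential map: for every $v \in T_pM$ in a neighborhood of $0$ where $\exp_p$ is a diffeomorphism onto a normal neighborhood $U$ of $p$, one has $\phi(\exp_p v) = \exp_p(\phi_{*p} v)$. This is the standard fact that isometries map geodesics to geodesics and are determined by their $1$-jet at a point, proved by noting $t \mapsto \phi(\exp_p(tv))$ is a geodesic with initial point $p$ and initial velocity $\phi_{*p}v$, hence equals $t \mapsto \exp_p(t\,\phi_{*p}v)$ by uniqueness of geodesics.

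First I would fix a normal neighborhood $U = \exp_p(V)$ of $p$, with $V \subset T_pM$ a star-shaped open neighborhood of $0$ on which $\exp_p$ is a diffeomorphism; shrinking if necessary we may also assume $\phi(U) \subset U'$ for some normal neighborhood, and then further shrink so that $\phi$ maps $U$ into the range of $\exp_p$. Next I would invoke the commutation relation above together with the hypothesis $\phi_{*p} = \mathrm{id}$ to conclude that for all $v \in V$,
\begin{equation*}
\phi(\exp_p v) = \exp_p(\phi_{*p} v) = \exp_p(v),
\end{equation*}
which says precisely that $\phi = \mathrm{id}$ on $U$. Since $U$ is a neighborhood of $p$, this is the desired local statement.

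The only genuine point requiring care — and the step I expect to be the main (minor) obstacle — is justifying the geodesic-preservation and the commutation with $\exp_p$ in the Lorentzian setting, where $\exp_p$ need not be defined on all of $T_pM$ and where one must track domains carefully: the identity $\phi(\exp_p v) = \exp_p(\phi_{*p}v)$ only makes sense where both sides are defined, so one restricts to a sufficiently small star-shaped $V$ and uses that $\phi$ is continuous with $\phi(p) = p$ to guarantee $\phi(\exp_p V)$ stays inside the domain of a normal chart. Once the bookkeeping of neighborhoods is set up, the argument is immediate: an isometry preserves the Levi-Civita connection, hence sends geodesics to geodesics with pushed-forward initial data, and uniqueness of geodesics with prescribed $1$-jet closes the loop. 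No curvature computation or global hypothesis is needed; everything is local around $p$.
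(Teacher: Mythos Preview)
Your argument is correct and is in fact the standard, clean way to prove this: isometries preserve the Levi-Civita connection, hence commute with $\exp_p$, and the hypothesis $\phi_{*p}=\mathrm{id}$ immediately gives $\phi=\mathrm{id}$ on a normal neighborhood. The domain bookkeeping you flag is the only thing to check, and you handle it appropriately.

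The paper takes a different and more circuitous route. It remarks that in the Riemannian case the result is easy ``since we have the concept of length,'' then worries that on the light cone of $p$ there is no notion of length, so it argues separately: first establish $\phi=\mathrm{id}$ on time-like directions by using maximal time-like geodesics (which do carry a proper-time length), and then extend to the null directions by continuity, since the light cone is the boundary of the chronological future/past of $p$. Your approach shows this detour is unnecessary: the exponential-map identity $\phi\circ\exp_p=\exp_p\circ\phi_{*p}$ relies only on uniqueness of solutions to the geodesic ODE, not on any length or causal character of the tangent vector, so it treats time-like, space-like, and null directions uniformly in one stroke. The paper's argument, as written, also leaves the space-like directions outside the light cone unaddressed, whereas your version covers the entire normal neighborhood at once.
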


\begin{proof}
In Riemannian geometry, it's easy since we have the concept of
length; in our case, the difficulty comes from the fact that on
the light-cone, we don't have the concept of length. But the
proposition holds inside light-cone since we can consider the
maximal time-like geodesics. Since locally light-cone is the
boundary of the future of the point $p$, the identity map can be
continued to the boundary.
\end{proof}


\begin{thebibliography}{99}

\bibitem{Al} S. Alexakis, \textit{Unique continuation for the vacuum Einstein equations}, Preprint (2008).

\bibitem{AIK} S. Alexakis, A. D. Ionescu, and S. Klainerman \textit{Hawking's local rigidity theorem without analyticity}, Preprint (2009).

\bibitem{FRW} H. Friedrich, I. R\'{a}cz, R. Wald, \textit{On the rigidity theorem for space-times with a stationary event horizon or a compact Cauchy horizon}, Commun. Math. Phys. {\bf{204}} (1999), 691--707.

\bibitem{HE} S.W. Hawking and G.F.R. Ellis, \textit{The large scale structure of space-time}, Cambridge Univ. Press (1973).

\bibitem{IK} A. D. Ionescu and S. Klainerman, \textit{On the uniqueness of smooth, stationary black holes in vacuum}, Invent. Math. {\bf{175}} (2009), 35--102.

\bibitem{IK2} A. D. Ionescu and S. Klainerman, \textit{Uniqueness results for ill-posed characteristic problems in curved space-times}, Commun. Math. Phys. {\bf{285}} (2009), 873--900.

\bibitem{I} W. Israel, \textit{Event horizons in static electrovac space-times}, Commun. Math. Phys. {\bf{8}} (1968), 245-260.

\bibitem{Rendall}   A. Rendall, \textit{Reduction of the characteristic initial value problem
to the Cauchy problem and its applications to the Einstein equations}, Proc. R. Soc. London. A {\bf 427} (1990), 221-239.

\end{thebibliography}
 \end{document}